\newcommand{\mc}{\mathcal}
\newcommand{\PPP}{\mathbb{P}}
\newcommand{\M}{\mathcal{M}}
\newcommand{\W}{\mathcal{W}}
\newcommand{\G}{\mathcal{G}}
\newcommand{\SSS}{\mathcal{S}}
\newcommand{\C}{\mathcal{C}}
\newcommand{\N}{\mathbb{N}}
\newcommand{\PP}{\mathcal{P}}
\newtheorem{theorem}{Theorem}
\newtheorem{definition}{Definition}
\newtheorem{proposition}{Proposition}
\begin{document}

\title{The price of re-establishing perfect, almost perfect or public monitoring in
games with arbitrary monitoring }

%
\author{\IEEEauthorblockN{Ma\"{e}l Le Treust\IEEEauthorrefmark{1} and
Samson Lasaulce\IEEEauthorrefmark{1}}
\IEEEauthorblockA{\IEEEauthorrefmark{1}Laboratoire des Signaux et Syst\`{e}mes,
CNRS  - Universit\'{e} Paris-Sud 11 - Sup\'{e}lec,
91191, Gif-sur-Yvette Cedex, France\\
Email: \{letreust\},\{lasaulce\}@lss.supelec.fr}}


\maketitle

\begin{abstract}
This paper establishes a connection between the
notion of observation (or monitoring) structure in game theory and
the one of communication channels in Shannon theory. One of the
objectives is to know under which conditions an arbitrary monitoring
structure can be transformed into a more pertinent monitoring structure.
To this end, a mediator is added to the game. The
objective of the mediator is to choose a signalling scheme that
 allows the players to have perfect, almost perfect or public monitoring
and all of this, at a minimum cost in terms of signalling. Graph
coloring, source coding, and channel coding are exploited to deal
with these issues. A wireless power control game
is used to illustrate these notions but the applicability of
the provided results and, more importantly, the framework of
transforming monitoring structures go much beyond this example.
\end{abstract}


\section{Introduction}
\label{sec:intro}

Observation or monitoring structures are omnipresent in games,
especially in dynamic games. Monitoring structures specify what the
players effectively observe. These observations allow a given player
to construct his private history, which is used, at a given instant,
as an input of a function defining his strategy. For instance,
observations may consist of action profiles (this is the case in
repeated games with perfect monitoring \cite{Sorin92} and
fictitious play \cite{Brown51}), arbitrary signals (this is the case in
repeated games with public signals \cite{Tomala98} and with an observation graph \cite{RT98}), or realizations
of the individual utility function (this is the case in stochastic
games between learning automata \cite{Sastry-94} and repeated game
with incomplete information \cite{GossnerHernandezNeyman06}). The problem is
that when players interact in a game with an arbitrary observation
structure, the possible outcomes might turn out to be unpredictable
and, even when they are, they might not have important properties
such as Nash equilibria. To be concrete, the characterization of
equilibrium utilities in repeated games with an arbitrary
observation structure is still an open problem \cite{RT04}.
In interactive situations where game theory is relevant like
distributed power control in wireless networks \cite{Lasaulce-Tutorial-09},
it is common that terminals do not observe the transmit power levels
of the other terminals \cite{LeTreustLasaulce(PowerControlRG)10,LeTreustTembineLasaulceDebbah10}.
 Being not able to predict all possible
operating points for such a network may cause a problem for the
network designer. In particular, ensuring the existence of efficient
Nash equilibria can be highly desirable when terminals implement
learning algorithms with partial observations \cite{Chandramouli-08}.

The above considerations show the importance of being able to
transform a given monitoring structure into a new one. But, how can
this be done? And at what price? This paper precisely falls in the
general framework which consist in proposing solutions to implement
such transformations and evaluating their cost in terms of
signalling. As far as the provided results are concerned, the
authors do not provide complete answers to these new questions.
Indeed, the scope of this paper is as follows.
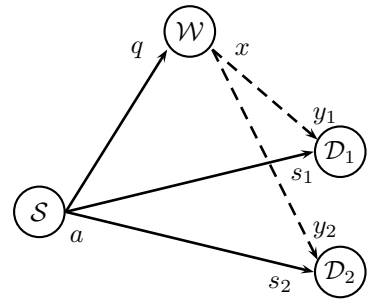
\begin{figure}[!ht]\label{figure:mediator}
\begin{center}
\psset{xunit=1cm,yunit=0.8cm}
\begin{pspicture}(-1,-1.3)(5,3.6)
\rput[u](2,3){$\mc{W}$} \rput[u](0,0){$\SSS$}
\rput[u](4,1){$\mc{D}_1$} \rput[u](4,-1){$\mc{D}_2$}
\rput[u](0.5,-0.4){$a$} \rput[u](2.7,2.7){$x$} \rput[u](1.3,2.7){$q$}
\rput[u](3.8,1.6){$y_1$} \rput[u](3.8,-0.3){$y_2$}
\rput[u](3.5,0.6){$s_1$} \rput[u](3.2,-1.2){$s_2$}
\pscircle(0,0){0.35} \pscircle(2,3){0.35} \pscircle(4,1){0.35}
\pscircle(4,-1){0.35} \psline[linewidth=1pt]{->}(0.35,0)(1.7,2.7)
\psline[linewidth=1pt]{->}(0.35,0)(3.65,1)
\psline[linewidth=1pt]{->}(0.35,0)(3.65,-1)
\psline[linestyle=dashed,linewidth=1pt]{->}(2.3,2.7)(3.7,1.2)
\psline[linestyle=dashed,linewidth=1pt]{->}(2.3,2.7)(3.7,-0.8)
\end{pspicture}
\caption{Interpreting the monitoring structure of a dynamic game as a communication problem.}
\end{center}
\end{figure}
First, one way to
transform a monitoring structure into a new one is to add a mediator (see Fig. \ref{figure:mediator})
in the game: this mediator does not have a strategic role here and is only
used for improving the observation capabilities of the players.
Second, even if the initial monitoring structure (without the
mediator) can be effectively arbitrary, the desired monitoring
resulting from the addition of the mediator is assumed to be perfect,
almost perfect or public, and therefore not arbitrary (the latter case is
left as a significant extension of this work). In the example of
distributed power control, the players would be the decisionnally
autonomous terminals while the mediator would be a base station or a
relay node. Whereas the ideas presented here seem seducing, the
question is how to tackle this general problem. One of the
contributions of this paper is to re-interpret observation
structures in games as channels in communication theory. Exploiting
this interpretation, several questions arise. Based on what the
mediator observes, does there exist a source code (at the mediator)
which allows the players to re-establish a perfect, almost perfect
or public observation of an information source (the action profiles
typically)? What is the minimum cost of signalling to re-establish
such an observation structure? Is the Shannon capacity
\cite{shannon-bell-1948} associated with the
initial observation structure high enough to convey the required
amount of signalling? Shannon theory \cite{cover-book-2006} and
graph theory \cite{BondyMurty}  brings appropriate answers to
all these questions. As it will be seen, the connection we establish
between game theory and Shannon theory opens many other interesting
issues such as: proving some equilibrium utilities are impossible to
reach in certain games because of limited channel capacities of the
considered observation structure; defining new channels in
communication theory from observation scenarios in game theory.

We provide a characterization of compatible monitoring structure
and a coding scheme that reconstruct $\varepsilon$-Perfect
Monitoring in Sec. \ref{SecAPMCommonMsg}.
After computing the price of re-establishing
the almost perfect monitoring ($\mathrm{PREEPM}$) we investigate
the reconstruction of Perfect Monitoring of the source in Sec. \ref{SecPublicMonitoring},
and the one-shot reconstruction of the almost Perfect Monitoring in Sec. \ref{SecAPMnodelay}.
We illustrate our results with the well-known ``prisoner's dilemma" in Sec. \ref{SecPrisonerAPM}.
The proof of the theorem are provided in the appendices \ref{Appendix}.

\section{System Model}

The purpose of this section is twofold: to review some basic
concepts and definitions from dynamic games, which are essential for
understanding the subsequent sections; to state
the general problem under investigation. Following the definition of Ba\c{s}ar
and Olsder (\cite{BasarOlsder82} pp. 205), a dynamic game consists in
 a sequence of stage games $\Gamma=(\mc{G}^t)_{t\in \N^*}$
 where at each stage $t\in \N^*$, we have:
 \begin{eqnarray*}
 \mc{G}^t&= &(\mc{K},\{\mc{P}^t_i\}_{i\in\mc{K} },\{\pi^t_i\}_{i\in\mc{K}},\omega^t,f^t,\\
&&\{\mc{S}^t_i\}_{i\in\mc{K} }, \{g_i^t\}_{i\in\mc{K}},\{h^t_i\}_{i\in\mc{K} }, \{\tau_i^t\}_{i\in\mc{K}})
\end{eqnarray*}
Denote $\mc{K} =\{1,...,K\}$ the set of players constant along the game,
 $\mc{P}^t_1,...,\mc{P}^t_K$ are the
corresponding sets of actions, $\pi^t_1,...,\pi^t_K$ are the payoff
(or cost) functions, $\omega^t$ is the state parameter and $f^t$
 is the state transition function, $g^t_1,...,g^t_K$
are the private monitoring functions at stage $t$ and $\mc{S}^t_1,...,\mc{S}^t_K$
are the corresponding sets of private signals, $h^t_1,...,h^t_K$
are the private histories and $\tau^t_1,...,\tau^t_K$
are the strategy functions.
Game stages correspond to time intervals at the beginning of which
players can choose their actions.
\begin{figure}[!ht]\label{figure:channel}
\begin{center}
\begin{tiny}
\psset{xunit=0.5cm,yunit=0.3cm}
\begin{pspicture}(-1,-6)(12,6)
\rput[u](-1.5,0){$\SSS$}
\rput[u](4.35,-5){$g_2$}
\rput[u](4.35,5){$g_1$}
\rput[u](10,2.5){$\mc{D}_1$}
\rput[u](10,-2.5){$\mc{D}_2$}
\rput[u](-0.6,-0.3){$a$}
\rput[u](10.3,3.5){$s_1$}
\rput[u](10.3,-3.6){$s_2$}
\rput[u](11.2,2.8){$\hat{a}_1$} \rput[u](11.2,-2.2){$\hat{a}_2$}
\psframe(-2,-0.5)(-1,0.5)
\psframe(9.5,2)(10.5,3)
\psframe(9.5,-3)(10.5,-2)
\pscircle(4.35,-5){0.35}
\pscircle(4.35,5){0.35}
\psline[linewidth=1pt](-1,0)(0,0)
\psline[linewidth=1pt](0,0)(0,-5)
\psline[linewidth=1pt](0,0)(0,5)
\psline[linewidth=1pt](0,-5)(3.65,-5)
\psline[linewidth=1pt](5.05,-5)(10,-5)
\psline[linewidth=1pt](0,5)(3.65,5)
\psline[linewidth=1pt](5.05,5)(10,5)
\psline[linewidth=1pt]{->}(10,5)(10,3)
\psline[linewidth=1pt]{->}(10,-5)(10,-3)
\psline[linewidth=1pt]{->}(10.5,2.5)(11.5,2.5)
\psline[linewidth=1pt]{->}(10.5,-2.5)(11.5,-2.5)
\end{pspicture}
\end{tiny}
\caption{The private monitoring channel.}
\end{center}
\end{figure}
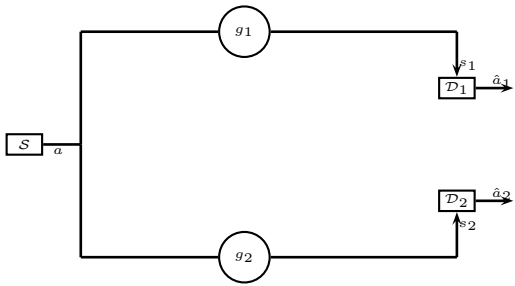

The strategic information is modeled by an information source
where $a(t)$ is produced by the
source at stage $t$. This strategic information may consists in action
profiles or arbitrary signals. We assume that, for a given game stage
$t \geq 1$, each player $i\in \mc{K}$ knows
and can take into account the past realizations of his private observation  $s_i$
drawn from the private monitoring $g_i$ (see Fig. \ref{figure:channel}).
Denote $\Delta(Z)$  the set of probabilities over the set $Z$.
\begin{eqnarray}
g_i : \mc{A} \longrightarrow \Delta(\mc{S}_i)
\end{eqnarray}
The main difference between static games and dynamic games is that players can take into account the
sequence of past strategic signals in their long-run strategy. Increasing the amount of strategic
information, increase the strategy space of the players. The vector $h^t_i = (s_i(1), ...,s_i(t-1))$
 is the private history of player $i$, at stage $t$ and lies in the
set $\mc{H}^t_i  = \left(\mc{S}_i \right)^{t-1}$. A strategy $\tau_i$ for player $i \in \mc{K}$ is a sequence of
strictly causal functions $\left(\tau_{i,t} \right)_{t \geq 1}$,
\begin{eqnarray}
\tau_{i,t}: \mc{H}^t_i  \rightarrow   \mc{P}^t_i
\end{eqnarray}
Let $\mc{T}_i$ be the set of strategies $\tau_i$ of player $i\in\mc{K}$ and $\tau =
(\tau_1, ..., \tau_K)$ be a joint strategy.

We introduce an additive signalling structure  called
``the mediator assisted monitoring channel", represented
 in Fig. (3).
 It consist of a triple $(\mc{W},m,f)$ where $\mc{W}$ denote the mediator, $m$ the observation channel of the mediator and $f$ the communication channel between the mediator and the players. The mediator also observes a noisy version $q$ of the information source $a$. It's has to relay every relevant information to the players in order to make them monitors the information source. The observation channel of the mediator is defined as follows.
Denote  $\mc{A}$ the set of strategic information and $Q$ the set of signals observed by the mediator.
\begin{eqnarray}
m : \mc{A} \longrightarrow \Delta(Q)
\end{eqnarray}
The communication  channel between the mediator and the players where $X$
 is the set of channel inputs and $Y_i$ is the set of signals observed by player $i\in \mc{K}$.
\begin{eqnarray}
f : X \longrightarrow \Delta(Y_1\times Y_2)
\end{eqnarray}
Thus at each stage $t\geq 1$  of the game, the players obtain a private observation $s_i^t$ and
a mediator's signal $y_i^t$. We investigates the properties of such an additive signalling structure in order to answer the question: Are the players able to observes the information source or not ?

Denote $\hat{a}^t_i$ the reconstructed version of the source by player $i\in\mc{K}$.
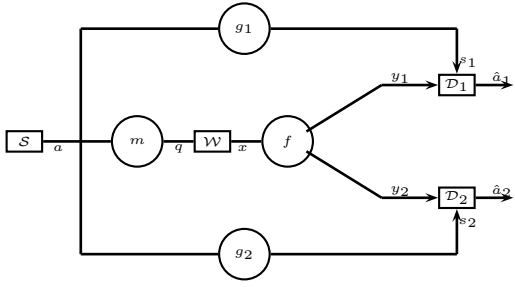
\begin{figure}[!ht]\label{figure:medcha}
\begin{center}
\begin{tiny}
\psset{xunit=0.5cm,yunit=0.3cm}
\begin{pspicture}(-1,-6)(12,6)
\rput[u](-1.5,0){$\SSS$} \rput[u](1.5,0){$m$}
\rput[u](4.35,-5){$g_2$} \rput[u](4.35,5){$g_1$}
\rput[u](3.5,0){$\mc{W}$} \rput[u](10,2.5){$\mc{D}_1$}
\rput[u](10,-2.5){$\mc{D}_2$} \rput[u](-0.6,-0.3){$a$}
\rput[u](2.6,-0.3){$q$} \rput[u](4.3,-0.3){$x$}
 \rput[u](5.5,0){$f$}
\rput[u](8.5,2.8){$y_1$} \rput[u](8.5,-2.2){$y_2$}
\rput[u](10.3,3.5){$s_1$} \rput[u](10.3,-3.6){$s_2$}
\rput[u](11.2,2.8){$\hat{a}_1$} \rput[u](11.2,-2.2){$\hat{a}_2$}
\psframe(-2,-0.5)(-1,0.5)
\psframe(3,-0.5)(4,0.5)
\psframe(9.5,2)(10.5,3)
\psframe(9.5,-3)(10.5,-2)
\pscircle(1.5,0){0.35}
\pscircle(5.5,0){0.35}
\pscircle(4.35,-5){0.35}
\pscircle(4.35,5){0.35}
\psline[linewidth=1pt]{->}(10.5,2.5)(11.5,2.5)
\psline[linewidth=1pt]{->}(10.5,-2.5)(11.5,-2.5)
\psline[linewidth=1pt](-1,0)(0.8,0)
\psline[linewidth=1pt](2.2,0)(3,0)
\psline[linewidth=1pt](4,0)(4.8,0)
\psline[linewidth=1pt](6,0.44)(8,2.5)
\psline[linewidth=1pt](6,-0.44)(8,-2.5)
\psline[linewidth=1pt]{->}(8,2.5)(9.5,2.5)
\psline[linewidth=1pt]{->}(8,-2.5)(9.5,-2.5)
\psline[linewidth=1pt](0,0)(0,-5)
\psline[linewidth=1pt](0,0)(0,5)
\psline[linewidth=1pt](0,-5)(3.65,-5)
\psline[linewidth=1pt](5.05,-5)(10,-5)
\psline[linewidth=1pt](0,5)(3.65,5)
\psline[linewidth=1pt](5.05,5)(10,5)
\psline[linewidth=1pt]{->}(10,5)(10,3)
\psline[linewidth=1pt]{->}(10,-5)(10,-3)
\end{pspicture}
\end{tiny}
\caption{The mediator-assisted monitoring channel.}
\end{center}
\end{figure}%
The course of the signalling process begins with the strategic
information $a$, generated by the source at a given stage.
The mediator $\mc{W}$ is assumed to have an imperfect
observation (namely $q$) of the symbols $a$ generated by the source
and knows the information structure of every player. Taking this
knowledge into account, the mediator applies certain mathematical operations
on what it observes and broadcasts a public signal $x$ to all the
players. Therefore, each player $i\in \mc{K}$
 receives a private signal $s_i$ and an additional
signal from the mediator denoted by $y_i$.

%

\section{Reconstruction of the $\varepsilon$-Perfect Monitoring}\label{SecAPMCommonMsg}

In this section, we investigate the reconstruction of the $\varepsilon$-Perfect Monitoring.
We introduce an additive signalling structure $(\mc{W},m,f)$ which operates as a relay in order to send an additional signal to the players.
We provide conditions over the additive signalling structure in order the players monitors almost perfectly the source of strategic information.
We first recall the definition of $\varepsilon$-perfect monitoring available in the literature
 \cite{ElyValimaki(Robust)02,HornerOlszewski06} and we present a ``max-min formulation" to compute the error parameter $\varepsilon$.
Then we define properly the ``reconstruction" of the $\varepsilon$-perfect monitoring at the players.
Based on a graph-coloring approach, we provide two conditions over the additive signalling structure $(\mc{W},m,f)$ that are sufficient to reconstruct the $\varepsilon$-perfect monitoring for the players. We call the first condition: ``the $(x,y)$-coloring condition". It regards the observation function of the mediator $m$ and it guarantee that the mediator can reconstruct the $\varepsilon$-perfect monitoring.
The second condition concerns the communication channel $f$ between the mediator and the players and is called the ``essential information condition".
It guarantee that the capacity of the channel $f$ allows the mediator to communicate the strategic information to the players.

In this section, we investigate the reconstruction problem using the framework of Shannon \cite{shannon-bell-1948}. We  make the following assumptions on the information source, the private monitoring and the mediator assisted channel.
\begin{itemize}
\item The information source is discrete and i.i.d.
\item The monitoring structure is stationary.
\item The players may tolerate a delay in the signalling.
\end{itemize}
These assumption allow us to derive the fundamental limit derived by Shannon on the information transmission. The results, we present in this section, are based on the three above  assumptions. However, the  strategies of the players may not always satisfy this properties. We relax these three hypothesis in Sec. (\ref{SecAPMnodelay}) and we derive alternative limits over the information transmission.

\begin{definition}{\cite{ElyValimaki(Robust)02,HornerOlszewski06}}
A monitoring $\Lambda : A \longrightarrow \Delta(\prod_{i\in \mc{K}}\Sigma_i)$ is $\varepsilon$-perfect (or almost perfect) if for each player $i\in \mc{K}$ there exists a partition $T_i=\{T_i^a: a \in \mc{A}\}$ of the signals $\Sigma_i$ such that for all $a\in \mc{A}$,
\begin{equation}
\sum_{\sigma_i\in T_i^a} \Lambda(\sigma_i|a)\geq 1-\varepsilon
\end{equation}
\end{definition}
We characterize the precision of the monitoring using a ``max-min formulation".
\begin{proposition}
A monitoring $\Lambda$ is $\varepsilon$-perfect if and only if
\begin{eqnarray*}
1-\varepsilon&=&\min_{i\in \mc{K}}\max_{T_i=(T_i^a)_a}\min_{a\in \mc{A}}\sum_{\sigma_i\in T_i^a} \Lambda(\sigma_i|a)\\
\Longleftrightarrow \varepsilon&=&\max_{i\in \mc{K}}\min_{T_i=(T_i^a)_a}\max_{a\in \mc{A}}\sum_{\sigma_i\notin T_i^a} \Lambda(\sigma_i|a)
\end{eqnarray*}
\end{proposition}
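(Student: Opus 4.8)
The plan is to read $\varepsilon$-perfection as a nested block of quantifiers and replace each quantifier by the corresponding extremization, finishing with a complementation identity for the second line. First, because $\mc{A}$ is finite, the clause ``for all $a\in\mc{A}$, $\sum_{\sigma_i\in T_i^a}\Lambda(\sigma_i|a)\geq 1-\varepsilon$'' is literally $\min_{a\in\mc{A}}\sum_{\sigma_i\in T_i^a}\Lambda(\sigma_i|a)\geq 1-\varepsilon$. Second, since $\Sigma_i$ is finite there are only finitely many partitions $T_i=(T_i^a)_a$, so ``there exists a partition $T_i$ such that \ldots'' is equivalent to $\max_{T_i}\min_{a}\sum_{\sigma_i\in T_i^a}\Lambda(\sigma_i|a)\geq 1-\varepsilon$, the maximum being attained. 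Third, ``for each player $i\in\mc{K}$'' is a universal quantifier over the finite set $\mc{K}$ and becomes the outer $\min_{i\in\mc{K}}$. Writing $\alpha:=\min_{i\in\mc{K}}\max_{T_i}\min_{a\in\mc{A}}\sum_{\sigma_i\in T_i^a}\Lambda(\sigma_i|a)$, these three reductions show that $\Lambda$ is $\varepsilon$-perfect if and only if $\alpha\geq 1-\varepsilon$.

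Since $\varepsilon$-perfection is monotone in $\varepsilon$ (if it holds for some value it holds for every larger one), the equality $1-\varepsilon=\alpha$ in the statement should be read as identifying the smallest admissible error parameter, i.e. the precision, and I would establish it through two inequalities. If $\Lambda$ is $\varepsilon$-perfect, the previous reduction immediately gives $\alpha\geq 1-\varepsilon$. Conversely, let $T_i^{\star}$ attain the inner maximum for each $i$, which exists by finiteness; then $\min_{a}\sum_{\sigma_i\in T_i^{\star a}}\Lambda(\sigma_i|a)\geq\alpha$, so $\sum_{\sigma_i\in T_i^{\star a}}\Lambda(\sigma_i|a)\geq\alpha=1-(1-\alpha)$ for every $a$, which exhibits $\Lambda$ as $(1-\alpha)$-perfect. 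Thus $1-\alpha$ is achievable and no smaller value is, giving the claimed equality at the optimum.

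The second (equivalent) expression follows by pure complementation and introduces no new idea. As $\Lambda(\cdot|a)$ is a probability law on $\Sigma_i$ and $(T_i^a)_a$ partitions $\Sigma_i$, for each fixed $a$ one has $\sum_{\sigma_i\in T_i^a}\Lambda(\sigma_i|a)=1-\sum_{\sigma_i\notin T_i^a}\Lambda(\sigma_i|a)$. Substituting this and repeatedly using $\min(1-\cdot)=1-\max(\cdot)$ and $\max(1-\cdot)=1-\min(\cdot)$ down the chain of extremizations, the quantity $\alpha=1-\varepsilon$ rewrites as $1-\max_{i}\min_{T_i}\max_{a}\sum_{\sigma_i\notin T_i^a}\Lambda(\sigma_i|a)$, which is exactly the second displayed identity.

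The computation is essentially bookkeeping, so the only genuine care is the logical reading of the equivalence. The crux, I expect, is the reverse inequality in the second paragraph: it forces one to exhibit an explicit optimal partition $T_i^{\star}$, which is legitimate only because the family of partitions of $\Sigma_i$ is finite and the maximum is attained, and it is this step that pins down $1-\alpha$ as the precision rather than as one admissible value among many.
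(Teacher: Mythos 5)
Your proof is correct and follows essentially the same route as the paper's: read the definition of $\varepsilon$-perfection as a block of nested quantifiers, replace ``for all'' by $\min$ and ``there exists'' by $\max$, and observe that the stated equality is the identification of the minimal admissible $\varepsilon$. You are in fact more careful than the paper's own proof, which silently drops the outer $\min_{i\in\mc{K}}$ over players, the finiteness/attainment argument for the optimal partition, and the complementation step yielding the second expression, all of which you spell out.
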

\begin{proof}
See Appendix \ref{ProofProposition1}
\end{proof}

After a joint action $a$ is played, each player $i\in \mc{K}$ obtains a private signal $s_i$ drawn from a private monitoring $g_i$.
\begin{eqnarray}
g_i &:& A \longrightarrow \Delta(S_i)\quad \forall i\in  K
\end{eqnarray}
The mediator observes a signal $q$ drawn from the observation channel $m$.
\begin{eqnarray}
m &:& A\longrightarrow \Delta(Q)
\end{eqnarray}
Then it send through the communication channel $f$ an additive signal to each players.
\begin{eqnarray}
f &:& X\longrightarrow \Delta(\prod_{i\in \mc{K}} Y_i)
\end{eqnarray}
This communication procedure induces a pair of signals $\sigma_i=(s_i,y_i)$ for each player where $s_i$ comes from the private monitoring $g_i$ and $y_i$ comes from the additional signalling structure $(\mc{W},m,f)$. We derive conditions over the additional signalling structure $(\mc{W},m,f)$ such that the joint signal $\sigma_i=(s_i,y_i)$ satisfies the $\varepsilon$-perfect condition.

\subsection{Reconstruction of the $\varepsilon$-Perfect Monitoring}

We define the notion of code in this framework. The mediator observes a sequence of signals $q$ and reduce it to what we called the ``essential information sequence" $r$ using a graph coloring argument. Then it encodes the sequences $r$ into a sequence $x$ using a joint source-channel coding procedure.

The players will decode the ``essential information" $r$ using the channel output $y_i$ and the private observation $s_i$. This ``essential information sequence" $r$ combined with the sequence of private monitoring $s_i$ characterizes a unique sequence $a$ of joint actions.
\begin{definition}
A $(n,h,\phi,(\psi_i)_{i\in \mc{K}})$-code is a pair of encoding functions for the mediator:
\begin{eqnarray*}
&h& : Q \longrightarrow R,\qquad \text{``essential information"}\\
&\phi& :  R ^n \longrightarrow X^n,\qquad \text{``source-channel encoding"}
\end{eqnarray*}
and a decoding function for each player:
\begin{eqnarray*}
&\psi_i& : Y_i^n \times S_i^n \longrightarrow A^n, \forall i\in \mc{K},\; \text{``source-channel decoding"}
\end{eqnarray*}
\end{definition}

We quantify the precision of the joint signal $\sigma_i=(s_i,y_i)$ using the following definition.
\begin{definition}
The mediator can reconstruct the $\varepsilon$-Perfect Monitoring if,
\begin{eqnarray*}
&\forall \delta>0,\;\exists (n,h,\phi,(\psi_i)_{i\in \mc{K}})\text{-process such that},&\\
&\PP\left[\exists \{T_i^a\}_{a \in A},\;\forall a\in A,\; \sum_{\sigma_i\in T_i^a} \Lambda(\sigma_i|a)\geq 1-\varepsilon\right]\geq 1-\delta&
\end{eqnarray*}
\end{definition}

For a given private monitoring structure $(g_i)_{i\in K}$, we provide sufficient conditions over the additive signalling structure $(\mc{W},m,f)$ such that the mediator can reconstruct the $\varepsilon$-perfect monitoring. Two natural questions arises : When the mediator observation function $m$ is sufficiently precise to guarantee  the $\varepsilon$-perfect monitoring at the players ? When the communication channel  $f$ between the mediator and the players allows to transmit all the relevant information ?

We provide an answer to the first question  using the $(x,y)$-coloring condition in the next subsection (\ref{subsec:xy-colorringcond}). The second question will be investigate in subsection (\ref{subsec:rate-ess-info}) using the concept of ``rate of essential information".

\subsection{The $(x,y)$-coloring Condition}\label{subsec:xy-colorringcond}

We define the $(x,y)$-coloring condition in order to characterize the observation functions $m$ of the mediator that are compatible with every private monitoring $g_i$ of the players $i\in \mc{K}$. This condition is based on a graph-coloring approach. We represent the private monitoring $g_i$ using an auxiliary graph (see Def. \ref{def:auxiliarygraph}) whose vertices are the joint actions $a$. There is an edge $e$ between two vertices $a$ and $a'$ if both joint action induce the same signal $s_i$ with large probability.

The main idea is the following. If the observation of the mediator $m$ is a coloring of the auxiliary graphs, then the information $m$ passing through the mediator is completely orthogonal to the private information $g_i$. Thus every joint actions can be distinguished by the players and the $\varepsilon$-perfect monitoring can be reconstructed.

\begin{definition}
Define the equivalence classes of actions for each of the private monitoring $g_i$ with
$i\in \mc{K}$ as follows.
\begin{eqnarray}
G_i(a)&=&\{s_i\in S_i,\; g_i(s_i|a)>1/2\},\\
a &\sim_{g_i}& b \Longleftrightarrow G_i(a) = G_i(b)
\end{eqnarray}
Denote $A_{g_i}=\{\alpha_i\}$ the partition of $A$ into equivalence
classes with respect to the relation $\sim_{g_i}$.
In the same way with the monitoring $m$.
\begin{eqnarray}
M(b)&=&\{q\in Q,\; m(q|b)>1/2\},\\
a &\sim_m& b \Longleftrightarrow M(a) = M(b)
\end{eqnarray}
Denote $A_m=\{\alpha_m\}$ the partition of $A$ into equivalence
 classes with respect to the relation $\sim_m$.
These equivalence classes induce a family of auxiliary monitoring defined by.
\begin{eqnarray}
\tilde{g_i} : A_{g_i} &\longrightarrow& \Delta(S_i)^{|A|}\\
\alpha_i &\longrightarrow&  (g_i(s|a))_{a\in \alpha_i}
\end{eqnarray}
and
\begin{eqnarray}
\tilde{m} : A_m &\longrightarrow& \Delta(Q)^{|A|}\\
\alpha_m &\longrightarrow&  (m(q|a))_{a\in \alpha_m}
\end{eqnarray}
The precision of the auxiliary monitoring $\tilde{g_i}$ and $\tilde{m}$ are computed in the following way. Let $\{S_{\alpha}\}_{\alpha\in A_{g_i}}$ a partition of the signals $s$ of player $i$ indexed by the equivalence classes $\alpha\in A_{g_i}$. Define in the same way $\{Q_{\beta}\}_{\beta\in A_m}$ a partition of the signals $q$ of mediator indexed by the equivalence classes $\beta\in A_m$.
\begin{eqnarray}
&&\max_{S_{\alpha}}\min_{\alpha\in A_{g_i}}\min_{a\in \alpha}\sum_{s\notin S_{\alpha}} g_i(s|a) = x_i\\
&&\max_{Q_{\beta}} \min_{\beta\in A_m}\min_{a\in \beta}\sum_{q\notin Q_{\beta}} m(q|a) = y
\end{eqnarray}
The monitoring $\tilde{g_i}$ is $x_i$-perfect and $\tilde{m}$ is $y$-perfect.
\end{definition}

\begin{definition}\label{def:auxiliarygraph}
The auxiliary graph of player $i\in \mc{K}$, denoted $\G_i = (A,E_i)$ is defined as follows,
\begin{eqnarray}
\exists e_i = (a,b)\in E_i \Longleftrightarrow a\sim_{g_i} b
\end{eqnarray}
\end{definition}
Inspired from graph coloring we define the following concept of $(x,y)$-coloring.
\begin{definition}
The monitoring $g_i$ and $m$ satisfy an $(x,y)$-coloring condition if :
\begin{itemize}
\item The auxiliary monitoring $\tilde{g_i}$ is $x$ perfect,
\item The auxiliary monitoring $\tilde{m}$ is $y$ perfect,
\item The partition $\{Q_{\beta}\}_{\beta\in A_m}$ induced by the auxiliary monitoring $\tilde{m}$ is a coloring $c : A \longrightarrow Q$ of the graph $\G_i $.
\end{itemize}
Remark that the last condition is equivalent to the following one:
the auxiliary monitoring $\tilde{g_i}$ is a coloring of the graph $\G_m $
defined by $e_m=(a,b)\in E_m \Longleftrightarrow a\sim_m b$.
\end{definition}

\subsection{The Rate of Essential Information}\label{subsec:rate-ess-info}
We define the rate of essential information in order to characterize
the channels $f$ between the mediator and the players that are compatible
with the amount of information the players need.
It could happened that the observation channel $m$ of the mediator satisfy
the above $(x,y)$-coloring condition, but not all the information $q$ is relevant.

In this subsection, we aim at reducing the relevant information to it's minimum.
To do so, we use a second coloring condition over a bi-auxiliary
graph $\widetilde{\G}$ to eliminate any redundant information between the signals $q$ and $s_i$.
We call the ``essential information" the sequence $r$ corresponding to a concatenation of the sequence of signals $q$.

\begin{definition}
The bi-auxiliary graph $\widetilde{\G} = (Q,\widetilde{E})$ is defined as follows,
\begin{eqnarray}
\exists e = (q,q')\in \widetilde{E} &\Longleftrightarrow& \exists i\in \mc{K},\; \exists a,b\in A, \text{ s.t. }\\
 &&q\in m(a),\\
 && q'\in m(b),\\
 && a \sim_{g_i} b
\end{eqnarray}
\end{definition}

\begin{definition}
Let $\tilde{h}: Q \longrightarrow R$ the minimal coloring of the bi-auxiliary graph $\widetilde{\G}$ and denote the random variable $r$ essential information drawn from the distribution $\tilde{h}\otimes m \otimes p$ such that $P(r)=\sum_{a,q} p(a)m(q|a)\tilde{h}(r|q)$.\\
Define the essential rate as follows.
\begin{eqnarray}
H=\max_{i\in \mc{K}}H(r|s_i)
\end{eqnarray}
where the random variable $s_i$ is drawn from the transition $T_i : R \longrightarrow \Delta(S_i)$ with,
\begin{eqnarray}
T_i(s|r) &=& \frac{\sum_{a,q}\PPP(a,q,r,s)}{\sum_{a,q}\PPP(a,q,r)} \\
& =& \frac{\sum_{a,q}p(a)m(q|a)\tilde{h}(r|q)g_i(s|a)}{\sum_{a,q}\sum_{a,q}p(a)m(q|a)\tilde{h}(r|q)}
\end{eqnarray}
\end{definition}
In the following, such a mapping $h$ is called recoloring of monitoring $m$.
The following coding theorem for broadcast channel with common messages
\cite{korner-it-1977} provides us an upper bound for transmits to the players the strategic information.

\begin{theorem}[Korner, Marton 1977 \cite{korner-it-1977}]
The capacity $\C_0$ of the broadcast channel $f : X\longrightarrow \Delta(\prod_{i\in \mc{K}} Y_i)$  with common messages is exactly,
\begin{eqnarray}
\C_0 = \max_{p\in \Delta(X)}\min_{i\in \mc{K}} I(X;Y_i)
\end{eqnarray}
\end{theorem}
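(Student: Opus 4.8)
This is a classical coding theorem, and the plan is to establish the two matching bounds that pin down $\C_0$, treating $f$ as a discrete memoryless broadcast channel and using only the single-letter machinery of Shannon theory \cite{cover-book-2006}.

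For the direct (achievability) part, I would first fix an input law $p^\star\in\Delta(X)$ attaining $\max_{p}\min_{i} I(X;Y_i)$, and for block length $n$ and rate $R$ draw one random codebook of $2^{nR}$ codewords, each generated i.i.d.\ from $p^\star$. The essential feature of the common-message setting is that a single codeword is transmitted per message and must be recovered by every receiver, so there is one shared codebook rather than one per player. Each player $i\in\mc{K}$ decodes by seeking the unique codeword jointly typical with its output $Y_i^n$; the standard joint-typicality estimate makes player $i$'s error probability vanish as $n\to\infty$ whenever $R<I(X;Y_i)$ with $X\sim p^\star$. A union bound over $\mc{K}$ then bounds the probability that \emph{some} player fails by the sum of the individual error probabilities, which still tends to $0$ as long as $R<\min_{i} I(X;Y_i)=\max_{p}\min_{i} I(X;Y_i)$. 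Averaging over codebooks and expurgating produces a deterministic code with the same guarantee, giving $\C_0\geq\max_{p}\min_{i} I(X;Y_i)$.

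For the converse, I would exploit that a uniform common message $W$ on $\{1,\dots,2^{nR}\}$ is decoded reliably by \emph{every} receiver. Fano's inequality at receiver $i$ gives $H(W\mid Y_i^n)\leq n\epsilon_n$ with $\epsilon_n\to0$, hence $nR=H(W)\leq I(W;Y_i^n)+n\epsilon_n$. Since the codeword satisfies $W\to X^n\to Y_i^n$, the data-processing inequality and the memorylessness of $f$ give $I(W;Y_i^n)\leq I(X^n;Y_i^n)\leq\sum_{t=1}^{n} I(X_t;Y_{i,t})$. Introducing the averaged input distribution $\bar p=\frac{1}{n}\sum_{t=1}^{n}p_{X_t}$ and using that $p\mapsto I(X;Y_i)$ is concave in the input law for the fixed marginal channel to receiver $i$, Jensen's inequality bounds $\frac{1}{n}\sum_{t} I(X_t;Y_{i,t})$ by $I(X;Y_i)$ evaluated at $X\sim\bar p$. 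The decisive point is that one and the same $\bar p$ appears in the bound for every $i$ at once, so $R\leq\min_{i} I(X;Y_i)+\epsilon_n\leq\max_{p}\min_{i} I(X;Y_i)+\epsilon_n$ with the mutual informations taken at $\bar p$; letting $n\to\infty$ yields $\C_0\leq\max_{p}\min_{i} I(X;Y_i)$.

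The crux, and the reason the capacity takes the $\max\min$ form rather than $\min\max$, is exactly this coupling across receivers. In the direct part one must certify that a \emph{single} randomly drawn codebook is simultaneously good for all players, which the union bound delivers; in the converse one must extract a \emph{common} single-letter input distribution valid for all receivers simultaneously, which the averaging-plus-concavity step delivers. Once these two coupling steps are in place, the surrounding typicality and Fano estimates are entirely routine.
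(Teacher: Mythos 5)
Your proof is sound, but there is nothing in the paper to compare it against: the paper states this theorem purely as a citation of K\"orner and Marton (1977) and uses it as a black box in the proofs of Theorems 2 and 3 (via the joint source--channel coding step $H \leq \C_0$); no proof of it appears in the appendices, which only prove Proposition 1 and Theorems 2--4. On its own merits, your argument is the standard and correct one for the common-message capacity of a discrete memoryless broadcast channel. The achievability side (a single i.i.d.\ codebook drawn from the max-min-achieving $p^\star$, separate joint-typicality decoders, union bound over the finitely many players, then expurgation) and the converse (Fano at each receiver, data processing, single-letterization via memorylessness, and the concavity/Jensen step applied to the \emph{common} averaged input law $\bar p$) are exactly the two halves needed, and you correctly identify the crux: the same codebook must serve all decoders in the direct part, and the same $\bar p$ must appear in every receiver's bound in the converse, which is what forces the $\max\min$ form. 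Two minor remarks: the existence of a maximizing $p^\star$ deserves a word (compactness of $\Delta(X)$ and continuity of $\min_i I(X;Y_i)$), and it is worth noting that K\"orner--Marton's actual paper treats the more general degraded-message-set problem, of which the common-message-only capacity quoted here is the special case your argument establishes directly.
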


The coding theorem we present is constructed over large blocs of strategic signals. Its implies that the players may tolerate a delay in the reconstruction of the $\varepsilon$-perfect monitoring. This assumption is relaxed in section (\ref{SecAPMnodelay}) below and an alternative result is presented.

\subsection{Main Result}\label{subsec:main-result}

We provide two conditions that ensure the additive signalling
structure $(\W,m,f)$ is compatible with the reconstruction
of the  $\varepsilon$-perfect monitoring. The first condition is based on
the $(x,y)$-coloring condition (see subsection (\ref{subsec:xy-colorringcond})) and guarantees that the
mediator is sufficiently informed to help the players
reconstruct the desired monitoring. The second condition is based on the
 ``essential information" (see subsection (\ref{subsec:rate-ess-info})) and ensures that the additional
information the mediator obtains, is compatible with the communication constraints
of the channel between the mediator and the players.

Condition $(1)$ : There exists a pair $(x,y)$ such that $x+y-xy\leq \varepsilon$ and for each player $i\in \mc{K}$, the private monitoring $g_i$ and the monitoring of the mediator $m$ satisfy an $(x,y)$-coloring condition.\\
Condition $(2)$ :  The essential rate $H$ satisfy $H\leq \C_0$, the capacity $\C_0$ of the channel $f$ with common messages.

\begin{theorem}[$\varepsilon$-PM]\label{theo:APM}
Fix a strategy profile $p\in \Delta(A)$, a monitoring structure $\M=(m,(g_i)_{i\in \mc{K}},f)$ and an $\varepsilon>0$.\\
If the monitoring structure $\M$ satisfy conditions $(1)$ and $(2)$, then the mediator can reconstruct the $\varepsilon$-Perfect Monitoring.
\end{theorem}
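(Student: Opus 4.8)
The plan is to exhibit the code $(n,h,\phi,(\psi_i)_{i\in\mc{K}})$ explicitly and then to split the analysis into a single-letter combinatorial part governed by Condition $(1)$ and a block-transmission part governed by Condition $(2)$. I take $h$ to be the recoloring of $m$ furnished by the minimal coloring of the bi-auxiliary graph $\widetilde{\G}$, so that the mediator converts its observation sequence $q^n$ into the essential information sequence $r^n$. The encoder $\phi$ will be a joint source--channel map: first a Slepian--Wolf type binning of $r^n$, then a Korner--Marton broadcast code applied to the bin index. Each decoder $\psi_i$ will first recover the bin index from the channel output $y_i^n$, then use its private sequence $s_i^n$ as decoder side information to recover $r^n$, and finally output the coordinatewise action estimate produced by combining $r$ with $s_i$.

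First I would carry out the single-letter analysis under Condition $(1)$. Fix one coordinate and condition on the played action $a$. Because, given $a$, the private signal $s_i$ and the mediator observation $q$ are produced by the two separate channels $g_i(\cdot\mid a)$ and $m(\cdot\mid a)$, the corresponding error events are independent. The $x$-perfectness of $\tilde g_i$ guarantees that $s_i$ identifies the $\sim_{g_i}$-class $\alpha_i\ni a$ with probability at least $1-x$, while the $y$-perfectness of $\tilde m$ guarantees that the color carried by $r$ identifies the $\sim_m$-class $\beta\ni a$ with probability at least $1-y$. Since $\{Q_\beta\}_{\beta\in A_m}$ is a coloring of $\G_i$, the pair $(\alpha_i,\beta)$ pins down $a$ uniquely, so the decision region $T_i^a=\{(s_i,y_i):\text{the decoder outputs }a\}$ is correct with probability at least $(1-x)(1-y)=1-(x+y-xy)\geq 1-\varepsilon$. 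This produces, for every player $i$, a partition $\{T_i^a\}_{a\in A}$ of the combined signal $\sigma_i=(s_i,y_i)$ meeting the $\varepsilon$-perfect condition, on the understanding that $r$ is available at the decoder.

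It remains to secure the availability of $r^n$ at every decoder, which is the role of Condition $(2)$. Under the i.i.d. source and stationary monitoring hypotheses the essential information $r$ is i.i.d. and correlated with each $s_i$ through the transition $T_i$. Source coding with side information at the decoder then lets player $i$ reconstruct $r^n$ from a description of rate $H(r\mid s_i)$ together with $s_i^n$, with error vanishing in $n$; as the same description is broadcast to all players, the required common rate is $\max_{i\in\mc{K}}H(r\mid s_i)=H$. By the Korner--Marton theorem the broadcast channel $f$ carries a common message reliably at every rate below $\C_0=\max_{p\in\Delta(X)}\min_{i\in\mc{K}}I(X;Y_i)$, so Condition $(2)$, namely $H\leq\C_0$, ensures that the bin index, and hence $r^n$, is recovered by every player with probability tending to one.

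The two parts are then glued as follows. Given $\delta>0$, pick $n$ large enough that the probability that some player fails to recover $r^n$ is below $\delta$. On the complementary event every decoder holds the correct color sequence, so the single-letter argument applies in each coordinate and the induced monitoring $\Lambda$ of the pairs $\sigma_i=(s_i,y_i)$ admits, for all $i$, partitions $\{T_i^a\}$ with $\sum_{\sigma_i\in T_i^a}\Lambda(\sigma_i\mid a)\geq 1-\varepsilon$ for every $a$; thus the $\varepsilon$-perfect condition holds with probability at least $1-\delta$, which is precisely the definition of reconstructing the $\varepsilon$-perfect monitoring. I expect the main obstacle to be the block-transmission part: one must check that the separation between Slepian--Wolf binning and Korner--Marton channel coding is lossless in this broadcast-with-side-information setting, so that a single common index of rate $\max_{i}H(r\mid s_i)$ simultaneously serves all decoders, and that matching this worst-case rate against $\C_0$ indeed suffices.
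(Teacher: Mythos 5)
Your overall architecture is the same as the paper's: a single-letter combinatorial argument from Condition $(1)$, Slepian--Wolf binning plus K\"orner--Marton common-message coding from Condition $(2)$, and a union-bound gluing step; the transmission and gluing parts are fine and match the paper (which invokes a joint source--channel coding argument at rate $H=\max_{i}H(r|s_i)\leq \C_0$). The gap is in the single-letter step. You claim that ``the $y$-perfectness of $\tilde{m}$ guarantees that the color carried by $r$ identifies the $\sim_m$-class $\beta\ni a$ with probability at least $1-y$,'' and you then let the decoder recover $a$ from the pair $(\alpha_i,\beta)$. But $r$ is the recoloring $\tilde{h}(q)$ obtained from the \emph{minimal} coloring of the bi-auxiliary graph $\widetilde{\G}$, and $\tilde{h}$ deliberately merges non-adjacent vertices of $\widetilde{\G}$: distinct $\sim_m$-classes can receive the same essential symbol $r$. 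The paper's own prisoner's-dilemma example exhibits this: $q_1$ and $q_3$ are both mapped to $r_1$, so $r$ does not determine the $\sim_m$-class, and the decoder you describe is not even well defined (your decision regions $T_i^a$ need not be disjoint across actions lying in the same $g_i$-class).

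What is missing is precisely the lemma the paper proves first: the recolored partition $(\tilde{h}(Q_{\beta}))_{\beta\in A_m}$ is \emph{still a coloring of each player's graph} $\G_i$. The reason is that if $a$ and $b$ are adjacent in $\G_i$, Condition $(1)$ makes their $q$-color sets disjoint, and then, by the definition of $\widetilde{\G}$, every $q\in M(a)$ and $q'\in M(b)$ are adjacent in $\widetilde{\G}$, so the coloring $\tilde{h}$ assigns them distinct symbols. Hence, although $r$ does not recover $\beta$, the pair $(s_i,r)$ still pins down $a$ within its $g_i$-class, the sets $T_i^a=S_{\alpha}(a)\times\tilde{h}(Q_{\beta}(a))$ form a genuine partition, and the bound
\begin{equation*}
\sum_{\sigma_i\in T_i^a}\Lambda(\sigma_i|a)\;\geq\;(1-x)(1-y)\;=\;1-(x+y-xy)\;\geq\;1-\varepsilon
\end{equation*}
goes through. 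With this lemma inserted in place of the false intermediate claim, your proof is correct and coincides with the paper's.
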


\begin{proof}
The proof is detailed in Appendix \ref{ProofTheorem2}.
\end{proof}

We provide conditions over the additive signalling structure $(\mc{W},m,f)$ that are sufficient to reconstruct the $\varepsilon$-perfect monitoring for the players. Note that a complete characterization is not available due to the problem of characterizing the precision of a two parallel monitoring functions.

We obtain a set of admissible additive signalling structure $(\mc{W},m,f)$ and we need an evaluation method to choose the best admissible additive signalling structure $(\mc{W},m,f)$ in term of signalling cost. For that reasons, we introduce the price of re-establishing $\varepsilon$-perfect monitoring as the ratio between the number of bits of the additive signalling and the number of bits of the source of strategic information.

\begin{definition}
Define the price of re-establishing $\varepsilon$-Perfect Monitoring:
\begin{equation}
\mathrm{PREEPM}^{\infty}(\varepsilon) = \frac{\max_{i\in \mc{K}}H(R|S_i)}{H(A)}
\end{equation}
\end{definition}
The worst case correspond to the situation where the mediator directly send the entire sequence of joint actions $a$. In that case the price is equal to 1. Obviously, the players would have all the strategic information and they can reconstruct the monitoring perfectly. However, this situation is not very interesting from our point of view since the capacity constraints between the mediator and the players may forbid the transmission of the strategic information.

Finding the minimal price of re-establishing $\varepsilon$-perfect monitoring is equivalent to finding the optimal admissible additive signalling structure $(\mc{W},m,f)$.

\section{Reconstruction of the Public Monitoring}\label{SecPublicMonitoring}

The problem of strategic observation are well studied in the framework of repeated game with public monitoring.
In this section, we assume that the source of strategic information is no more a joint action but a public signal.
For example, if the public signal we consider satisfies the ``individual
and pairwise full rank conditions" of \cite{FudenbergYamamoto09},
then the set of the equilibria is fully characterized even if the game is stochastic.
We extend our results to the perfect reconstruction of the information source without error (i.e. where $\varepsilon=0$).
We provide sufficient and necessary conditions on the additional signalling structure
 $\W$ for being compatible with the reconstruction of the perfect monitoring.

\subsection{The ``Painting" Condition}
The main difference here is the precision of the monitoring of the information source: $\varepsilon=0$.
We provide here a necessary and sufficient condition over the observation function $m$ of the mediator such as reconstruct the perfect monitoring of the source of strategic information. This condition is also based on graph coloring and we called it ``the painting condition" in reference to C. Berge.

We construct a graph where the vertices are the public signals $a$. There is an edge between to publics signals $a$ and $a'$ if the same private signal $s_i$ is drawn with positive probability. We prove that the observation of the mediator is orthogonal to the private monitoring if and only if the observations $q$ of the mediator is a coloring of the graph.
\begin{definition}
Denote the sets of possible signals.
\begin{eqnarray}
G_i(a)&=&\{s_i\in S_i,\; g_i(s_i|a)>0\},\quad \forall i\in \mc{K}\\
M(b)&=&\{q\in Q,\; m(q|b)>0\}
\end{eqnarray}
\end{definition}

\begin{definition}
The auxiliary graph of player $i\in \mc{K}$, denoted $\G_i = (A,E_i)$ is defined as follows:
\begin{eqnarray}
\exists e_i = (a,b)\in E_i \Longleftrightarrow G_i(a)\cap G_i(b)\neq \emptyset
\end{eqnarray}
\end{definition}

We define the concept of painting of a graph $\G$ as a correspondence $m : A \rightrightarrows Q$ if every selection $\bar{m}:A \rightarrow Q$ of $m$ is a coloring of the graph $\G$.

\begin{definition}
The monitoring of the mediator $m$ is a painting of the family of graphs $(\G_i)_{i\in \mc{K}}$ induced by the private monitoring $(g_i)_{i\in \mc{K}}$ if for all $i\in \mc{K}$ we have
\begin{eqnarray}
\exists e_i=(a,b) \in E_i \Longleftrightarrow m(a)\cap m(b)=\emptyset
\end{eqnarray}
\end{definition}

\subsection{Main Result}
As in the previous section, we provide two conditions (over $m$ and $f$) such that the additive signalling structure $(\W,m,f)$ is compatible with the reconstruction of the perfect monitoring. This result is stronger than the previous one because we provide necessary and sufficient conditions.

\begin{definition}
Define the following conditions:\\
Condition $(1')$ :  The monitoring of the mediator $m$ is a painting of the family of graphs $(\G_i)_{i\in \mc{K}}$.\\
Condition $(2)$ :  The essential rate $H$ satisfies $H\leq \C_0$, the capacity $\C_0$ of the channel $f$ with common messages.
\end{definition}

\begin{theorem}[PM]\label{theo:PM}
Fix a strategy profile $p\in \Delta(A)$ and monitoring structure $\M=(m,(g_i)_{i\in \mc{K}},f)$.\\
The mediator can reconstruct the Perfect Monitoring for Strategy $p$ if and only if the monitoring structure $\M$ satisfy conditions $(1')$ and $(2)$.
\end{theorem}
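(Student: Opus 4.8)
The plan is to prove the two implications separately, since the statement upgrades the sufficiency of Theorem \ref{theo:APM} to a full characterization by exploiting $\varepsilon=0$. Throughout I would keep the two ``layers'' apart: condition $(1')$ governs whether the pair $(q,s_i)$ carries enough information to pin down the action $a$ (a purely combinatorial, channel-independent question), while condition $(2)$ governs whether the channel $f$ can actually carry the residual information $r$ to every player. So the first step is to record the combinatorial fact underlying $(1')$: if $m$ is a painting of $(\G_i)_{i\in\mc{K}}$, then for every player $i$ the pair $(r,s_i)$ determines $a$ with probability one, where $r=\tilde h(q)$ is the recoloring of the bi-auxiliary graph $\widetilde{\G}$.

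For sufficiency ($\Leftarrow$) I would argue as follows. Fix $i$ and a realized $s_i$; the candidate actions are $\{b: s_i\in G_i(b)\}$, which form a clique of $\G_i$ since any two of them share the signal $s_i$. Condition $(1')$ gives $m(a)\cap m(b)=\emptyset$ for every such pair, so the observed $q\in m(a)$ lies outside every competing support; and because $a\sim_{g_i}b$ puts an edge of $\widetilde{\G}$ between $q$ and each $q'\in m(b)$, the color $r=\tilde h(q)$ differs from $\tilde h(q')$. Hence $r$ singles out $a$ among the candidates, i.e. the map $(r,s_i)\mapsto a$ is well defined. It then remains to deliver $r$ reliably to each decoder who already holds $s_i$, for which I would invoke source/channel separation: bin the i.i.d. sequence $r^n$ at rate $\max_{i}H(r\mid s_i)+\delta=H+\delta$ (Slepian--Wolf binning), so that decoder $i$ recovers $r^n$ from the bin index together with $s_i^n$; then transmit the common bin index over $f$ with the Korner--Marton broadcast code quoted above, which is feasible under condition $(2)$. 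Composing the channel decoder, the Slepian--Wolf decoder and the map $(r,s_i)\mapsto a$ yields a $\psi_i$ reconstructing $a^n$ with vanishing error, which is perfect monitoring since $\varepsilon=0$.

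For necessity ($\Rightarrow$) I would contrapose each condition. If $(1')$ fails there are $i$ and actions $a\neq b$ in the support of $p$ with $G_i(a)\cap G_i(b)\neq\emptyset$ and $m(a)\cap m(b)\neq\emptyset$; then with strictly positive probability the realized pair $(q,s_i)$ is compatible with both $a$ and $b$. As the mediator sees only $q$ and player $i$ sees only $(s_i,y_i)$, with $y_i$ produced from a channel input that the mediator forms from $q$ alone, the two scenarios induce identical statistics at player $i$, so any decoder errs on one of them with probability bounded away from $0$ and perfect reconstruction is impossible. For condition $(2)$ I would run a converse: reconstructing $a^n$ from $(y_i^n,s_i^n)$ forces player $i$ to reconstruct the essential residual $r^n$, so by Fano's inequality and the Slepian--Wolf converse at least $H(r\mid s_i)$ bits per symbol must reach decoder $i$; since the same description must serve all players simultaneously it is a common message, whose largest reliable rate over $f$ is $\C_0$ by the Korner--Marton theorem, giving $H=\max_i H(r\mid s_i)\le\C_0$.

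The main obstacle I expect is the rate converse. Its delicate point is to justify that the quantity which \emph{must} be conveyed is exactly the essential information $r$ attached to the \emph{minimal} coloring $\tilde h$, and not some larger or smaller statistic: one has to show that no encoder can reconstruct $a$ at all players while communicating strictly less than $\max_i H(r\mid s_i)$, which relies on the minimality of the coloring of $\widetilde{\G}$ to exclude any further reduction and on a single-letterization plus Fano argument to exclude beating the Slepian--Wolf limit. Verifying that painting is simultaneously necessary and sufficient at the combinatorial level is comparatively routine, but care is needed because $\sim_{g_i}$ is here defined through the support overlap $G_i(a)\cap G_i(b)\neq\emptyset$ rather than through equality of supports, so the clique structure feeding the bi-auxiliary graph must be tracked exactly.
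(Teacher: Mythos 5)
Your proposal is correct and follows essentially the same route as the paper: sufficiency by showing the painting condition $(1')$ plus the recoloring $\tilde h$ make $(r,s_i)$ determine $a$, then delivering $r$ over $f$ under condition $(2)$ via source--channel coding; necessity by contraposition, with the positive-probability indistinguishability argument when $(1')$ fails and a rate converse when $(2)$ fails. The only difference is one of detail: where you spell out the converse via Fano, the Slepian--Wolf limit and the Korner--Marton common-message capacity (and correctly flag the delicate step of showing that exactly the minimal-coloring statistic $r$ must be conveyed), the paper compresses this into a single citation of the Merhav--Shamai source--channel converse.
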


\begin{proof}
The proof is detailed in Appendix \ref{ProofTheorem3}.

\end{proof}
We obtain a set of admissible additive signalling structure $(\mc{W},m,f)$ and we introduce the price of re-establishing perfect monitoring in order to evaluate the performance of the reconstruction.

\begin{definition}
Define the price of re-establishing Perfect Monitoring:
\begin{equation}
\mathrm{PRPM}^{\infty}(\varepsilon) = \frac{\max_{i\in \mc{K}}H(R|S_i)}{H(A)}
\end{equation}
\end{definition}

Finding the minimal price of re-establishing $\varepsilon$-perfect monitoring is equivalent to finding the optimal admissible additive signalling structure $(\mc{W},m,f)$ for reconstruct the perfect monitoring of the information source.

\section{One-shot Reconstruction of the $\varepsilon$-Perfect Monitoring}\label{SecAPMnodelay}

In the previous sections, we have assumed that the source of strategic
 information  was i.i.d., the channel was stationary and the players tolerate a
delay before reconstructing the desired monitoring.
In this section, we relax these three hypothesis and we
investigate a ``one-shot" reconstruction of the $\varepsilon$-perfect monitoring.
Note that, the techniques we develop in this section also
apply to the reconstruction of the perfect monitoring.

Once the strategic information is drawn, the mediator
provides an additional information to the players before the end of the game stage.
The definition of the one-shot reconstruction consists in replacing the
number $n$ of stages by 1 in the definition of the
long term reconstruction in Sec.\ref{SecAPMCommonMsg}.

\subsection{The Condition $z$-Perfect}

The main difference regards the communication channel $f$ between the mediator and the players.
Assuming the ``one-shot" reconstruction prevent us to use the classical coding scheme from Shannon theory.
We introduce the condition $z$-perfect in order to characterize
the channels $f$ (see Fig. 3) between the mediator and the players compatible with
 the one shot reconstruction of the $\varepsilon$ perfect monitoring.

\begin{definition}
 The channel between the mediator and each player is $z$-Perfect if, for each players,
  there exists a partition $Y_i=\{Y_i^r\}_{r \in R}$ of the signals indexed by the
   set of essential information $R$ such that for all $r\in R$:
\begin{equation}
\sum_{y_i\in Y_i^r} f(y_i|r) \geq 1 -z
\end{equation}
\end{definition}

\subsection{Main Result}
The result of this section is widely based on the one in section (\ref{SecAPMCommonMsg}).
Define as above the bi-auxiliary graph $\widetilde{\G}$, and the mapping
$\tilde{h} : Q \longrightarrow R$ is called recoloring of monitoring $m$.
The condition (2) with the entropy inequality of theorem (\ref{theo:APM})
 is replaced by the $z$-perfect condition (2').

\begin{definition}
Define the following conditions:\\
Condition $(1)$ : For each player $i\in \mc{K}$, the private monitoring $g_i$ and
the monitoring of the mediator $m$ satisfy an $(x,y)$-coloring condition.\\
Condition $(2')$ : The channel $f$ between the mediator and each player is $z$-Perfect.
\end{definition}

\begin{theorem}[$\varepsilon$-PM]\label{theo:nodelayAPM}
Fix a strategy profile $p\in \Delta(A)$, a monitoring structure
$\M=(m,(g_i)_{i\in \mc{K}},f)$ and an $\varepsilon>0$.\\
If the monitoring structure $\M$ satisfy conditions $(1)$ and $(2')$ with,
\begin{eqnarray}
x+y+z-xy-xz-zy+xyz \leq \varepsilon
\end{eqnarray}
Then the mediator can reconstruct the $\varepsilon$-Perfect Monitoring in one-shot.
\end{theorem}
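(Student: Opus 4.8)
The plan is to reuse the architecture of the proof of Theorem~\ref{theo:APM} and to replace its capacity-based (asymptotically error-free) transmission of the essential information by the one-shot, $z$-perfect transmission granted by Condition~$(2')$. The first thing I would record is the algebraic identity $x+y+z-xy-xz-zy+xyz=1-(1-x)(1-y)(1-z)$, so that the hypothesis is exactly $(1-x)(1-y)(1-z)\geq 1-\varepsilon$. This already dictates the whole strategy: three ``success'' events of probabilities $1-x$ (private signal), $1-y$ (mediator's observation) and $1-z$ (channel) must be shown to combine multiplicatively, not by a union bound.

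Next I would fix the one-shot code. The mediator observes $q\sim m(\cdot|a)$, forms the essential information $r=\tilde h(q)$ through the minimal coloring of the bi-auxiliary graph $\widetilde{\G}$, and broadcasts the channel input $\phi(r)$, so that the $z$-perfect condition governs the transmission of $r$. Player $i$ then receives $(s_i,y_i)$ with $s_i\sim g_i(\cdot|a)$ and $y_i$ drawn from the channel output given $\phi(r)$. Conditions~$(1)$ and~$(2')$ provide three partitions: the $x$-perfect blocks $\{S_\alpha\}_{\alpha\in A_{g_i}}$ of $S_i$ (giving $\sum_{s\in S_{\alpha_i(a)}}g_i(s|a)\geq 1-x$), the $y$-perfect blocks $\{Q_\beta\}_{\beta\in A_m}$ of $Q$ (giving $\sum_{q\in Q_{\beta(a)}}m(q|a)\geq 1-y$), and the $z$-perfect blocks $\{Y_i^r\}_{r\in R}$ of $Y_i$ (giving $\sum_{y\in Y_i^r}f(y|r)\geq 1-z$). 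I would then define the decoder $\psi_i(s_i,y_i)$ to read the private label $\alpha$ from the block $S_\alpha\ni s_i$ and the essential label $r$ from the block $Y_i^r\ni y_i$, and to output the unique action compatible with $(\alpha,r)$. Its preimages $T_i^a=\psi_i^{-1}(a)$ form, by construction, a partition of $\Sigma_i=S_i\times Y_i$, which is precisely the object the definition of $\varepsilon$-perfect monitoring asks for.

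The core is then an identifiability claim together with a factorized error bound. For identifiability I would argue, on the good event, that $(\alpha,r)$ determines $a$: suppose some $a'\neq a$ were also compatible, witnessed by $q'\in Q_{\beta(a')}$ with $\alpha_i(a')=\alpha$ and $\tilde h(q')=r$; then $a\sim_{g_i}a'$, so the $(x,y)$-coloring condition forces $\beta(a)\neq\beta(a')$, hence $Q_{\beta(a)}$ and $Q_{\beta(a')}$ are disjoint, $q\neq q'$, and $(q,q')$ is an edge of $\widetilde{\G}$, which forces $\tilde h(q)\neq\tilde h(q')$, a contradiction. This is exactly where both halves of Condition~$(1)$ (the $y$-perfect blocks and the coloring of $\G_i$) and the bi-auxiliary recoloring are used together, as in Theorem~\ref{theo:APM}. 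For the error bound I would track the three good events $A_1=\{s_i\in S_{\alpha_i(a)}\}$, $A_2=\{q\in Q_{\beta(a)}\}$, $A_3=\{y_i\in Y_i^{\tilde h(q)}\}$ along the chain $a\to q\to r\to\phi(r)\to y_i$, with $s_i$ branching off at $a$. Since $s_i$ is conditionally independent of $(q,y_i)$ given $a$, and $\PP(A_3\mid q)\geq 1-z$ for every $q$,
\begin{equation*}
\PP(A_1\cap A_2\cap A_3\mid a)=\PP(A_1\mid a)\,\E_{q\mid a}\!\big[\UN_{A_2}\,\PP(A_3\mid q)\big]\geq (1-x)(1-y)(1-z).
\end{equation*}
On $A_1\cap A_2\cap A_3$ the decoder returns $a$, so $\Lambda(T_i^a\mid a)\geq(1-x)(1-y)(1-z)\geq 1-\varepsilon$ for every $a$ and every $i$; as the code is deterministic, the event inside the outer probability of the reconstruction definition holds surely, giving $1\geq 1-\delta$.

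The step I expect to be the main obstacle is the multiplicative combination of the three errors and, tied to it, checking that the recoloring $r=\tilde h(q)$ does not erase the distinction needed to pin down $\beta(a)$. Concretely, one must be sure that the ambiguity $q=q'$ between two $g_i$-equivalent actions cannot occur \emph{on} $A_2$ (this is what the disjointness of the $y$-perfect blocks buys), so that the mediator's $y$-error and the channel's $z$-error attach to genuinely separate stages of the chain; and one must legitimately invoke $s_i\perp(q,y_i)\mid a$. Once the displayed factorization is justified, the threshold $1-(1-x)(1-y)(1-z)$ is immediate and the remainder is bookkeeping identical in spirit to the delayed case.
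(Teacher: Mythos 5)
Your proposal is correct and follows essentially the same route as the paper's proof: the paper likewise forms the product sets $T_a^i = S_{\alpha}^i\times Q_{\beta}\times Y_i^r$ from the three layers of blocks given by Conditions $(1)$ and $(2')$, and bounds the joint probability by the factorization $\sum_{s}g_i(s|a)\sum_{q}m(q|a)\sum_{y}f_i(y|r)\geq(1-x)(1-y)(1-z)=1-(x+y+z-xy-xz-zy+xyz)$, exactly your multiplicative (non-union-bound) combination. Your explicit identifiability step (that the recoloring $\tilde h$ preserves the coloring of $\G_i$, so $(\alpha,r)$ pins down $a$) is stated only implicitly in the paper's Theorem 4 proof, which borrows it from the proof of Theorem 2, so your write-up is a slightly more self-contained rendering of the same argument.
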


\begin{proof}
The proof is detailed in Appendix \ref{ProofTheorem4}.
\end{proof}

We provide conditions over the additive signalling structure $(\mc{W},m,f)$
that are sufficient to reconstruct the $\varepsilon$-perfect monitoring
in one-shot. Remark that a complete characterization is not available.

In order to evaluate the best additive signalling structure $(\mc{W},m,f)$,
we introduce the price of one-shot re-establishing the $\varepsilon$-Perfect Monitoring.
In the one-shot case, the entropy $H(A)$ is replaced by $\log|A|$.

\begin{definition}
Define the price of one-shot re-establishing $\varepsilon$-Perfect Monitoring:
\begin{equation}
\mathrm{PREEPM} = \frac{\log |R|}{\log |A|}
\end{equation}
\end{definition}

Finding the minimal price of one-shot re-establishing $\varepsilon$-perfect monitoring is equivalent to finding the optimal admissible additive signalling structure $(\mc{W},m,f)$.

\section{Prisoner's Dilemma}\label{SecPrisonerAPM}

We consider a simple wireless power control game where our result
 may direclty apply. Following the framework of Goodman and Mandayan
 \cite{goodman-pc-2000}, we consider a decentralized multiple access
  channel where the players choose their power control policy in order
  to maximize their energy efficiency.

We consider a two player power control game where the actions are the transmit power $p_1$ and $p_2$.
The energy-efficiency utility is defined as follows:
\begin{eqnarray}
\label{eq:def-of-utility} u_i(p_1,p_2)= \frac{f(\mathrm{SINR}_i)}{p_i} \ [\mathrm{bit} / \mathrm{J}],\qquad i\in \mc{K}
\end{eqnarray}
where the function $f(x)$ is sigmoidal (here we take $f(x)=(1-e^{-x})^M$.
The SINR at receiver $i \in \mc{K}$ writes as:
\begin{equation}
\label{eq:sinr-ne} \mathrm{SINR}_i=\frac{p_i|g_i|^2}{
p_j|g_j|^2/N+\sigma^2}
\end{equation}
with parameters $|g_i|^2$ for the channel gain and $\sigma^2$ for the noise variance.

In our previous work over this communication model \cite{LeTreustLasaulce(PowerControlRG)10}, we investigate two interesting power levels.
The first is the power of the Nash equilibrium denoted $p_i^*$ and the second is the power of the operating point $\tilde{p}_i$ which provide a Pareto optimal utility. The power $p_i^*$ and $\tilde{p}_i$ are defined respectively in equation (4) and  (9) of the article \cite{LeTreustLasaulce(PowerControlRG)10}.

In order to illustrate our results, we provide an complete analysis of a simple example which can be easily generalized.
We consider a two player power control game where only two power levels $(p_i^*,\tilde{p}_i)$ are available to the players.
Fix the parameters of the power control game for the random CDMA case. The number of players $K=2$, the number of symbols $M=2$, the spreading factor $N=2$, the channel gains $|g_1|^2=|g_2|^2=1$ and the noise variance $\sigma^2=1$. The set of achievable utility is described by the following payoff matrix. The utility pair $(0.10,0.34)$ mean that the utility of player 1 is 0.10 and 0.34 is the utility of player 2. The region of achievable utility using pure and mixed strategies is represented by the quadrilateral on figure (4).
$$\begin{tabular}{ccc}
&\phantom{b}$\tilde{p}_2$&\phantom{b}$p^*_2$\\
\end{tabular}
$$
\vspace{-0.5cm}
$$
\begin{tabular}{c|c|c|}
  \cline{2-3}
$\tilde{p}_1$&0.23,0.23 & 0.10,0.34\\
    \cline{2-3}
$p_1^*$&0.34,0.10 & 0.15,0.15\\
    \cline{2-3}
\end{tabular}
$$

\vspace{0.5cm}
Remark that this game is strategically equivalent to the Prisonner's Dilemma
where the Nash equilibrium correspond to the joint action $(p_1^*,p_2^*)$
and the social optimal action correspond to $(\tilde{p}_1,\tilde{p}_2)$.
We consider as an example a private monitoring structure $g_1,g_2$ as defined below.
We fix the additive signalling structure $(\mc{W},m,f)$, an i.i.d. mixed strategy
 $p\in \Delta(A)$ and we prove it allow the reconstruction of the $\varepsilon$-perfect
  monitoring. We compute the price of re-establishing equilibrium conditions.

The source of strategic information in this case, represents the sequence of actions of the players.
The actions are supposed to be drawn i.i.d. from a distribution over the set of the players' actions.
Denote the private monitoring of player $i$, $g_i : A \longrightarrow S_i$ with precision parameters $x'\leq x$.

\begin{figure}[!ht]
\begin{center}
\psset{xunit=0.5cm,yunit=0.6cm}
\begin{pspicture}(0,-0.5)(10,6.3)
\psdot(0,0)
\psdot(0,2)
\psdot(0,4)
\psdot(0,6)
\psdot(4,1)
\psdot(4,5)
\psdot(6,0)
\psdot(6,2)
\psdot(6,4)
\psdot(6,6)
\psdot(10,2)
\psdot(10,4)
\psline[linewidth=1pt]{->}(0,0)(4,1)
\psline[linewidth=1pt]{->}(0,2)(4,1)
\psline[linewidth=1pt]{->}(0,4)(4,5)
\psline[linewidth=1pt]{->}(0,6)(4,5)
\psline[linewidth=0.5pt,linestyle=dashed]{->}(0,0)(4,5)
\psline[linewidth=0.5pt,linestyle=dashed]{->}(0,2)(4,5)
\psline[linewidth=0.5pt,linestyle=dashed]{->}(0,4)(4,1)
\psline[linewidth=0.5pt,linestyle=dashed]{->}(0,6)(4,1)
\psline[linewidth=1pt]{->}(6,0)(10,2)
\psline[linewidth=1pt]{->}(6,2)(10,4)
\psline[linewidth=1pt]{->}(6,4)(10,2)
\psline[linewidth=1pt]{->}(6,6)(10,4)
\psline[linewidth=0.5pt,linestyle=dashed]{->}(6,0)(10,4)
\psline[linewidth=0.5pt,linestyle=dashed]{->}(6,2)(10,2)
\psline[linewidth=0.5pt,linestyle=dashed]{->}(6,4)(10,4)
\psline[linewidth=0.5pt,linestyle=dashed]{->}(6,6)(10,2)
\rput[r](-0.2,6){$\tilde{p}_1\tilde{p}_2$}
\rput[r](-0.2,4){$\tilde{p}_1p^*_2$}
\rput[r](-0.2,2){$p^*_1\tilde{p}_2$}
\rput[r](-0.2,0){$p^*_1p^*_2$}
\rput[r](5.8,6){$\tilde{p}_1\tilde{p}_2$}
\rput[r](5.8,4){$\tilde{p}_1p^*_2$}
\rput[r](5.8,2){$p^*_1\tilde{p}_2$}
\rput[r](5.8,0){$p^*_1p^*_2$}
\rput[l](4.2,5){$s_1$}
\rput[l](4.2,1){$s'_1$}
\rput[l](10.2,4){$s_2$}
\rput[l](10.2,2){$s'_2$}
\rput[u](2,5){$1-x$}
\rput[u](2,1){$1-x$}
\rput[u](8.2,5.6){$1-x'$}
\rput[u](8.2,0.3){$1-x'$}
\rput[u](6.5,3){$1-x'$}
\rput[u](2.7,2.3){$x$}
\rput[u](2.7,3.7){$x$}
\rput[u](10,3){$x'$}
\rput[u](-2,3){$g_1$}
\rput[u](4,3){$g_2$}
\end{pspicture}
\end{center}
\end{figure}

Denote $A_{g_1},A_{g_2}$ the equivalence classes of private monitoring $g_1$ and $g_2$ over the actions $A$.
\begin{eqnarray}
A_1&=&\{(\tilde{p}_1\tilde{p}_2,\tilde{p}_1p^*_2);(p^*_1\tilde{p}_2,p^*_1p^*_2)\}=\{\alpha,\alpha'\}\\
A_2&=&\{(\tilde{p}_1\tilde{p}_2,p^*_1\tilde{p}_2);(\tilde{p}_1p^*_2,p^*_1p^*_2)\}=\{\beta,\beta'\}
\end{eqnarray}
These equivalence classes induce a pair of auxiliary monitoring denoted
\begin{eqnarray}
\tilde{g_1} : A_1 &\longrightarrow& \Delta(S_1)^{|A|}\\
\alpha &\longrightarrow&  (g_1(s|a))_{a\in \alpha}\\
\tilde{g_2} : A_2 &\longrightarrow& \Delta(S_2)^{|A|}\\
\beta &\longrightarrow&  (g_2(s|a))_{a\in \beta}
\end{eqnarray}
Taking the partitions $S_{\alpha}=s_1$ ; $S_{\alpha'}=s_1'$ and $S_{\beta}=s_2$ ; $S_{\beta'}=s'_2$ we calculate the precision of the auxiliary monitoring $\tilde{g_1}$ and $\tilde{g_2}$.
\begin{eqnarray}
\min_{\alpha\in A_1}\min_{a\in \alpha}\sum_{s\in S_{\alpha}} g_1(s|a) = x\\
\min_{\beta\in A_2}\min_{a\in \beta}\sum_{s\in S_{\beta}} g_2(s|a) = x'
\end{eqnarray}
The monitoring $\tilde{g_1}$ is $x$-perfect and $\tilde{g_2}$ is $x'$-perfect.

The monitoring graphs corresponding to the above equivalence classes of the private observations are,

\begin{figure}[!ht]
\begin{center}
\psset{xunit=0.5cm,yunit=0.6cm}
\begin{pspicture}(0,-0.5)(10,2.3)
\psdot(0,0)
\psdot(0,2)
\psdot(2,0)
\psdot(2,2)
\psdot(8,0)
\psdot(8,2)
\psdot(10,0)
\psdot(10,2)
\psline[linewidth=1pt](0,2)(2,2)
\psline[linewidth=1pt](2,0)(0,0)
\psline[linewidth=1pt](8,0)(8,2)
\psline[linewidth=1pt](10,2)(10,0)
\rput[r](-0.2,2){$\tilde{p}_1\tilde{p}_2$}
\rput[l](2.2,2){$\tilde{p}_1p^*_2$}
\rput[r](-0.2,0){$p^*_1\tilde{p}_2$}
\rput[l](2.2,0){$p^*_1p^*_2$}
\rput[r](7.8,2){$\tilde{p}_1\tilde{p}_2$}
\rput[l](10.2,2){$\tilde{p}_1p^*_2$}
\rput[r](7.8,0){$p^*_1\tilde{p}_2$}
\rput[l](10.2,0){$p^*_1p^*_2$}
\rput[u](-2,1){$\G_{\tilde{g_1}}$}
\rput[u](6,1){$\G_{\tilde{g_2}}$}
\end{pspicture}
\end{center}
\end{figure}

Note that by construction, each of those graph is an union of complete graphs.

The mediator observes a signal drawn from $m : A \longrightarrow Q$.

\begin{figure}[!ht]
\begin{center}
\psset{xunit=0.5cm,yunit=0.5cm}
\begin{pspicture}(-1,-0.5)(4.5,6.3)
\psdot(0,0)
\psdot(0,2)
\psdot(0,4)
\psdot(0,6)
\psdot(4,0)
\psdot(4,6)
\psdot(4,3)
\psline[linewidth=1pt]{->}(0,0)(4,0)
\psline[linewidth=1pt]{->}(0,2)(4,3)
\psline[linewidth=1pt]{->}(0,4)(4,3)
\psline[linewidth=1pt]{->}(0,6)(4,6)
\psline[linewidth=0.5pt,linestyle=dashed]{->}(0,0)(4,3)
\psline[linewidth=0.5pt,linestyle=dashed]{->}(0,2)(4,0)
\psline[linewidth=0.5pt,linestyle=dashed]{->}(0,4)(4,6)
\psline[linewidth=0.5pt,linestyle=dashed]{->}(0,6)(4,3)
\rput[r](-0.2,6){$\tilde{p}_1\tilde{p}_2$}
\rput[r](-0.2,4){$\tilde{p}_1p^*_2$}
\rput[r](-0.2,2){$p^*_1\tilde{p}_2$}
\rput[r](-0.2,0){$p^*_1p^*_2$}
\rput[l](4.2,6){$q_1$}
\rput[l](4.2,3){$q_2$}
\rput[l](4.2,0){$q_3$}
\rput[u](2,6.4){$1-y$}
\rput[u](2,-0.6){$1-y$}
\rput[u](1.1,3){$1-y$}
\rput[u](2.4,1.1){$y$}
\rput[u](2.4,4.9){$y$}
\rput[u](-2,3){$m$}
\end{pspicture}
\end{center}
\end{figure}

Denote the equivalence classes of the monitoring $m$ of the mediator over the actions $A$.
\begin{eqnarray}
A_m&=&\{(\tilde{p}_1\tilde{p}_2);(\tilde{p}_1p^*_2,p^*_1\tilde{p}_2);(p^*_1p^*_2))\}=\{\gamma,\gamma',\gamma''\}
\end{eqnarray}
These equivalence classes induce an auxiliary monitoring denoted
\begin{eqnarray}
\tilde{m} : A_m &\longrightarrow& \Delta(Q)^{|A|}\\
\gamma &\longrightarrow&  (m(q|a))_{a\in \gamma}
\end{eqnarray}
Taking the partitions $Q_{\gamma}=q_1$, $Q_{\gamma'}=q_2$ and $Q_{\gamma''}=q_3$ we calculate the precision of the auxiliary monitoring $\tilde{m}$.
\begin{eqnarray}
\min_{\gamma\in A_m}\min_{a\in \gamma}\sum_{q\in Q_{\gamma}} m(q|a) = y
\end{eqnarray}
The monitoring $\tilde{m}$ is $y$-perfect.

In order to decide whether the mediator can reconstruct the desired monitoring,
let us check if the auxiliary monitoring $\tilde{m}$ of the mediator is a coloring
of the graphs $\G_{\tilde{g_1}}$ and $\G_{\tilde{g_2}}$ of the players.
To illustrate this, we associate the colors blue, red and green to respectively $q_1$, $q_2$ and $q_3$.
\begin{figure}[!ht]
\begin{center}
\psset{xunit=0.5cm,yunit=0.6cm}
\begin{pspicture}(0,-0.5)(10,2.3)
\psdot(0,0)
\psdot(0,2)
\psdot(2,0)
\psdot(2,2)
\psdot(8,0)
\psdot(8,2)
\psdot(10,0)
\psdot(10,2)
\pscircle[linewidth=1pt,linecolor=red](0,0){0.1}
\pscircle[linewidth=1pt,linecolor=blue](0,2){0.1}
\pscircle[linewidth=1pt,linecolor=green](2,0){0.1}
\pscircle[linewidth=1pt,linecolor=red](2,2){0.1}
\pscircle[linewidth=1pt,linecolor=red](8,0){0.1}
\pscircle[linewidth=1pt,linecolor=blue](8,2){0.1}
\pscircle[linewidth=1pt,linecolor=green](10,0){0.1}
\pscircle[linewidth=1pt,linecolor=red](10,2){0.1}
\psline[linewidth=1pt](0,2)(2,2)
\psline[linewidth=1pt](2,0)(0,0)
\psline[linewidth=1pt](8,0)(8,2)
\psline[linewidth=1pt](10,2)(10,0)
\rput[r](-0.2,2){$\tilde{p}_1\tilde{p}_2$}
\rput[l](2.2,2){$\tilde{p}_1p^*_2$}
\rput[r](-0.2,0){$p^*_1\tilde{p}_2$}
\rput[l](2.2,0){$p^*_1p^*_2$}
\rput[r](7.8,2){$\tilde{p}_1\tilde{p}_2$}
\rput[l](10.2,2){$\tilde{p}_1p^*_2$}
\rput[r](7.8,0){$p^*_1\tilde{p}_2$}
\rput[l](10.2,0){$p^*_1p^*_2$}
\rput[u](-0.2,2.4){$\textcolor[rgb]{0.00,0.00,1.00}{q_1}$}
\rput[u](2.2,2.4){$\textcolor[rgb]{1.00,0.00,0.00}{q_2}$}
\rput[u](-0.2,-0.4){$\textcolor[rgb]{1.00,0.00,0.00}{q_2}$}
\rput[u](2.2,-0.4){$\textcolor[rgb]{0.00,1.00,0.00}{q_3}$}
\rput[u](7.8,2.4){$\textcolor[rgb]{0.00,0.00,1.00}{q_1}$}
\rput[u](10.2,2.4){$\textcolor[rgb]{1.00,0.00,0.00}{q_2}$}
\rput[u](7.8,-0.4){$\textcolor[rgb]{1.00,0.00,0.00}{q_2}$}
\rput[u](10.2,-0.4){$\textcolor[rgb]{0.00,1.00,0.00}{q_3}$}
\rput[u](-2,1){$\G_{\tilde{g_1}}$}
\rput[u](6,1){$\G_{\tilde{g_2}}$}
\end{pspicture}
\end{center}
\end{figure}
For each player $i\in \mc{K}$, the pair of auxiliary monitoring $(m,g_i)$ satisfy an $(x,y)$-coloring condition (recall that $x'\leq x$).
Thus, the mediator gets sufficient information to reconstruct the $\varepsilon$-perfect monitoring at the players with $\varepsilon = x+y-xy$.

To extract the essential information from mediator's signal $q$ without decreasing the precision of the monitoring, let us introduce the following bi-auxiliary graph.
\begin{figure}[!ht]
\begin{center}
\psset{xunit=0.6cm,yunit=0.6cm}
\begin{pspicture}(-0.4,-0.5)(2.4,2.3)
\psdot(0,0)
\psdot(0,2)
\psdot(1.7,1)
\pscircle[linewidth=1pt,linecolor=cyan](0,0){0.1}
\pscircle[linewidth=1pt,linecolor=magenta](0,2){0.1}
\pscircle[linewidth=1pt,linecolor=magenta](1.7,1){0.1}
\psline[linewidth=1pt](0,0)(0,2)
\psline[linewidth=1pt](0,0)(1.7,1)
\rput[r](-0.3,0){$q_2$}
\rput[r](-0.3,2){$q_1$}
\rput[l](2,1){$q_3$}
\rput[u](0.1,2.5){$\textcolor[rgb]{1.00,0.00,0.50}{r_1}$}
\rput[u](0.1,-0.5){$\textcolor[rgb]{0.00,1.00,1.00}{r_2}$}
\rput[u](1.6,1.5){$\textcolor[rgb]{1.00,0.00,0.50}{r_1}$}
\rput[u](-2,1){$\G_{m}$}
\end{pspicture}
\end{center}
\end{figure}
The coloring $h : Q \longrightarrow R$ of the above bi-auxiliary graph characterizes the essential information the mediator should give to the player in order to re-establish the $\varepsilon$-perfect monitoring. Recall that this essential information is optimal in the sense of the cardinality of $R$ and of the precision of the monitoring. It cannot be reduced without introducing a larger ambiguity between action profiles for at least one player. The process of strategic information is described as follows:

\begin{figure}[!ht]
\begin{center}
\psset{xunit=0.5cm,yunit=0.6cm}
\begin{pspicture}(0,-0.5)(10,6.3)
\psdot(0,0)
\psdot(0,2)
\psdot(0,4)
\psdot(0,6)
\psdot(4,0)
\psdot(4,3)
\psdot(4,6)
\psdot(6,0)
\psdot(6,3)
\psdot(6,6)
\psdot(10,1)
\psdot(10,4)
\psline[linewidth=1pt]{->}(0,0)(4,0)
\psline[linewidth=1pt]{->}(0,2)(4,3)
\psline[linewidth=1pt]{->}(0,4)(4,3)
\psline[linewidth=1pt]{->}(0,6)(4,6)
\psline[linewidth=1pt]{->}(6,0)(10,4)
\psline[linewidth=1pt]{->}(6,3)(10,1)
\psline[linewidth=1pt]{->}(6,6)(10,4)
\rput[r](-0.2,6){$\tilde{p}_1\tilde{p}_2$}
\rput[r](-0.2,4){$\tilde{p}_1p^*_2$}
\rput[r](-0.2,2){$p^*_1\tilde{p}_2$}
\rput[r](-0.2,0){$p^*_1p^*_2$}
\rput[l](4.2,6){$q_1$}
\rput[l](4.2,3){$q_2$}
\rput[l](4.2,0){$q_3$}
\rput[r](5.8,6){$q_1$}
\rput[r](5.8,3){$q_2$}
\rput[r](5.8,0){$q_3$}
\rput[l](10.2,4){$r_1$}
\rput[l](10.2,1){$r_2$}
\rput[u](-2,3){$m$}
\psline[linewidth=0.5pt,linestyle=dashed]{->}(0,0)(4,3)
\psline[linewidth=0.5pt,linestyle=dashed]{->}(0,2)(4,0)
\psline[linewidth=0.5pt,linestyle=dashed]{->}(0,4)(4,6)
\psline[linewidth=0.5pt,linestyle=dashed]{->}(0,6)(4,3)
\rput[u](2,6.4){$1-y$}
\rput[u](2,-0.6){$1-y$}
\rput[u](1.1,3){$1-y$}
\rput[u](2.4,1.1){$y$}
\rput[u](2.4,4.9){$y$}
\rput[u](5,5){$h$}
\end{pspicture}
\end{center}
\end{figure}
For each player $i\in \mc{K}$, the pair of auxiliary monitoring $(h\circ m,g_i)$ still satisfy an $(x,y)$-coloring condition and is moreover minimal in term of cardinality $|R|$. To reconstruct the $\varepsilon$-perfect monitoring at the player, the mediator will send the common information $r$ such that each player, knowing the private signal $s_i$ can reconstruct the right action profile with probability more than $1-\varepsilon$.

Suppose now that player 1 plays a mixed strategy $(2/3,1/3)$ and player 2 plays a mixed strategy $(2/3,1/3)$. Assume from now that they play repeatedly  following this mixed strategy. A sequence of action profiles is generated from the distribution $p\in \Delta(A)$ and it leads to the payoff vector $(0.22,0.22)$ (see Fig. (4)).
$$\begin{tabular}{ccc}
&\phantom{b}$\tilde{p}_2$&\phantom{b}$p^*_2$\\
\end{tabular}
$$
\vspace{-0.6cm}
$$
\begin{tabular}{c|c|c|}
  \cline{2-3}
$\tilde{p}_1$&4/9 & 2/9\\
    \cline{2-3}
$p^*_1$&2/9 & 1/9\\
    \cline{2-3}
\end{tabular}
$$
The entropy of such a distribution source is $H(a)=\log 9 -4/3\simeq 1.8366$. Fix the noise level of the transitions functions at $x=x'=y=1/10$. The process of information generated a source of essential information $(r_1,r_2)$ with distribution $(49/90, 41/90)$ of entropy $H(r) \simeq 0.9943$.
To transmit the source of essential information, the mediator considers the side information $s_i$ from the transition channel of player $i$.
\begin{eqnarray}
T_i(s|r_1) &=& \frac{\sum_{a,q}\PPP(a,q,r,s)}{\sum_{a,q}\PPP(a,q,r)} \\
& =& \frac{\sum_{a,q}p(a)m(q|a)h(r|q)g_i(s|a)}{\sum_{a,q}\sum_{a,q}p(a)m(q|a)h(r|q)}
\end{eqnarray}
The transition matrix of the channel are evaluated.
\begin{eqnarray}
T_1(s_1|r_1) &=& 353/490 \\
T_1(s'_1|r_1)&=& 137 /490 \\
T_1(s_1|r_2) &=& 217/410 \\
T_1(s'_1|r_2) &=& 193/410
\end{eqnarray}
We represent it as a binary channel.
\begin{figure}[!ht]
\begin{center}
\psset{xunit=1cm,yunit=1cm}
\begin{pspicture}(0,0)(4,2.3)
\psdot(0,0)
\psdot(0,2)
\psdot(4,0)
\psdot(4,2)
\psline[linewidth=1pt]{->}(0,0)(4,0)
\psline[linewidth=1pt]{->}(0,2)(4,2)
\psline[linewidth=1pt]{->}(0,0)(4,2)
\psline[linewidth=1pt]{->}(0,2)(4,0)
\rput[r](-0.2,0){$r_2$}
\rput[r](-0.2,2){$r_1$}
\rput[l](4.2,0){$s'_1$}
\rput[l](4.2,2){$s_1$}
\rput[u](2.4,2.2){$353/490$}
\rput[d](1.8,-0.3){$217/410$}
\rput[u](3.1,0.9){$137/490$}
\rput[u](0.8,0.8){$193/410$}
\rput[u](-1,1){$T_1$}
\end{pspicture}
\end{center}
\end{figure}
The channel transition of player 2 is also characterized and is found to be identical.
Using the Slepian and Wolf binning scheme, the entropy of the essential information with side information writes as :
\begin{eqnarray}
H=\max_{i\in \mc{K}}H(R|S_i)
\end{eqnarray}
In this case, we have $p(r_1|s_1)= 353/570$, $p(r_2|s'_1)=217/330$, $p(r_2|s_1)=193/570$ and $p(r_1|s'_1)=137/330$ and $T_1=T_2$. The minimal information rate sent by the mediator to both players is
\begin{eqnarray}
H=H(R|S_1)=H(R|S_2)\simeq 0.9451
\end{eqnarray}
Under the condition that the rates pair $(H,H)$ belong to the capacity region of the channel between the mediator and the players, the mediator can reconstruct the $\varepsilon$-perfect monitoring at the players with precision $\varepsilon= x+y-xy = 19/100$. \\
The price of re-establishing $\varepsilon$-Perfect Monitoring writes:
\begin{equation}
\mathrm{PREEPM}^{\infty} = \frac{\max_{i\in \mc{K}}H(r|s_i)}{H(a)} \simeq \frac{0.9451}{1.8366}\simeq 0.5145
\end{equation}
Taking the same monitoring structure with a noise level $x=x'=y=0$,
we investigate the noiseless version of the reconstruction of the perfect monitoring.
The price of re-establishing Perfect Monitoring becomes:
\begin{equation}
\mathrm{PRPM}^{\infty} = \frac{\max_{i\in \mc{K}}H(r|s_i)}{H(a)} \simeq 0.5
\end{equation}
We conclude that in the noiseless problem, the private
 monitoring structure provides almost half the information
needed by the players to reconstruct the source of strategic information.
Whereas in the noisy case, the additional monitoring structure is in charge
 by almost 51.5 \% of the reconstruction the strategic source of  information.

\section{Concluding remarks}

The monitoring problem of strategic information is addressed
in this paper. Taking into account the private monitoring structure, a mediator is introduced in order to re-establish
$\varepsilon$-perfect monitoring at the players. In order to
evaluate the signaling cost for the mediator, the problem of
the reconstructing a strategic information is re-interpreted as a channel of communication
theory. Graph theory and Shannon theory are respectively exploited
to provide a characterization of the admissible monitoring structure
and analyze their efficiency in term of ``price of re-establishing
$\varepsilon$-Perfect Monitoring" ($PREEPM$). A coding theorem
is provided for the channels where the mediator observes the source
imperfectly and the strategic information is drawn from an i.i.d.
source. Challenging open problems appear when considering a source of information
 generated by an arbitrary stochastic process. For example, in the case of imperfect monitoring of past actions,
 the players can choose an appropriate sequence of actions such as to manipulate the coding schemes.
Another interesting extension is to consider a mediator that sends private messages
to the players instead of common messages.
It would also be of interest to provide conditions for changing an imperfect monitoring structure $\M$
into another imperfect monitoring structure $\M'$ (not necessarily perfect or almost perfect).

\bibliographystyle{plain}
\bibliography{BiblioMael}

\appendix \label{Appendix}

\section{Proof of Proposition 1}\label{ProofProposition1}

By definition, $\varepsilon$ is the minimum admissible value such that:
\begin{tiny}
\begin{eqnarray*}
&&\exists \;T=(T_a)_a,\; \forall a\in A,\quad \sum_{\sigma_i\in T_a}\Lambda(\sigma_i|a)\geq 1-\varepsilon\\
&\Longleftrightarrow& \exists \;T=(T_a)_a,\quad  \min_{a\in A} \sum_{\sigma_i\in T_a}\Lambda(\sigma_i|a)\geq 1-\varepsilon\\
&\Longleftrightarrow& 1-\varepsilon\leq\max_{T=(T_a)_a}\min_{a\in A}\sum_{\sigma_i\in T_a} \Lambda(\sigma_i|a)
\end{eqnarray*}
\end{tiny}
Taking the minimum admissible value for $\varepsilon$, the monitoring $\Lambda$ is $\varepsilon$-perfect if and only if there is equality in the above equation.

\section{Proof of Theorem 2}\label{ProofTheorem2}

\begin{proof}
\begin{footnotesize}
We will prove that the conditions $(1)$ and $(2)$ are sufficient. The first conditions $(1)$ states that there exists a pair $(x,y)$ such that $x+y-xy\leq \varepsilon$ and for each player $i\in \mc{K}$, the private monitoring $g_i$ and the monitoring of the mediator $m$ satisfy an $(x,y)$-coloring condition. Taking now the minimal coloring of the bi-auxiliary graph $\tilde{h} :Q \longrightarrow R$, we will show that for every player $i\in \mc{K}$, the joint monitoring $(g_i,h\circ m)$ is $x+y-xy$ perfect.\\
Fix a player $i\in \mc{K}$. Let $\{Q_{\beta}\}_{\beta\in A_m}$ be the partition of signals $q\in Q$ indexed by the equivalence classes of $A$ with respect to the monitoring $m$ and $\{S_{\alpha}\}_{\alpha\in A_{g_i}}$ the partition of signals $s\in S_i$ indexed by the equivalence classes of $A$ with respect to the monitoring $g_i$. By hypothesis, $\{Q_{\beta}\}_{\beta\in A_m}$ is a coloring of the graph $G_{g_i}$. We first show that $(\tilde{h}(Q_{\beta}))_{\beta \in A_m}$ is still a coloring of the graph $G_{g_i}$. Take $a$ and $b$ two neighbor nodes of the graph $G_{g_i}$. By the coloring property, the sets of associated color $Q_{\alpha}(a)$ and $Q_{\alpha}(b)$ are disjoint. Thus each pair of colors $q\in Q_{\alpha}(a)$ and  $q'\in Q_{\alpha}(b)$ are neighbor in the bi-auxiliary graph. The coloring $\tilde{h} :Q \longrightarrow R$ of the bi-auxiliary graph implies that $(\tilde{h}(Q_{\alpha}))_{\alpha\in A_m}$ is still a coloring of the graph $G_{g_i}$.
Second, the coloring property implies that the following product $T_a=S_{\alpha}(a) \times \tilde{h}(Q_{\alpha}(a))$ defines a partition $(T_a)_{a\in A}$ of $T$. Let us calculate the precision of such a joint monitoring. For all strategic information $a\in A$:
\begin{tiny}
\begin{eqnarray*}
\sum_{s,r\in T_a}\Lambda(s,r|a)&=&\sum_{s\in S_{\alpha}(a)}g_i(s|a)\sum_{q\in Q_{\beta}(a)}\sum_{r\in \tilde{h}(Q_{\beta}(a))}\tilde{h}(r|q)m(q|a)\\
&\geq&\sum_{s\in S_{\alpha}(a)} g(r|a)\sum_{q\in Q_{\beta}(a)}m(q|a)\\
&\geq& (1-y)(1-x)=1-(x+y-xy)
\end{eqnarray*}
\end{tiny}
Thus, for each player $i\in \mc{K}$, the monitoring $(g_i,\tilde{h}\circ m)$ satisfies an $x+y-xy$-Perfect Monitoring condition. It remains to transmit that signal over the broadcast channel with common messages $f$. The condition $(2)$ states that the essential rate $H$ satisfies $H\leq \C_0$, the capacity $\C_0$ of the broadcast channel $f$ with common messages \cite{korner-it-1977}. The joint source-channel coding theorem states that there exists appropriate mappings:
\begin{eqnarray}
&\phi& :  R ^n \longrightarrow X^n\\
&\psi_i& : Y_i^n \times S_i^n \longrightarrow A^n,\qquad \forall i\in \mc{K}
\end{eqnarray}
such that transmitting the source $r$ over the broadcast channel with common messages $f$ is possible with an error probability $\PP_e^n\leq \delta$.
The above mappings correctly transmit every sequence $r^n$ with probability more than $\PP(r^n=\hat{r^n})\geq 1-\delta$. When the sequence $r^n$ is correctly decoded, at each stages the symbol $r$ combined with the side symbol $s_i$ for each player $i$, are associated to an strategic information profile $a$ where the stage error probability is bounded by $x+y-xy$. We proved that
\begin{tiny}
\begin{eqnarray*}
\PP\left[\exists T_i=\{T_i^a: a \in A\},\;\forall a\in A,\; \sum_{\sigma_i\in T_i^a} \Lambda(\sigma_i|a)\geq 1-\varepsilon\right]\geq 1-\delta
\end{eqnarray*}
\end{tiny}
\end{footnotesize}
\end{proof}

\section{Proof of Theorem 3}\label{ProofTheorem3}

\begin{proof}
\begin{footnotesize}
First we show that conditions $(1')$ and $(2)$ are sufficient.
The monitoring of the mediator $m$ is a painting of the family of graphs $(\G_i)_{i\in \mc{K}}$.
This implies that for each player $i \in \mc{K}$, for each pair of strategic information $a,b\in A$,
 if the private signal as a positive probability to be the same, then the signal observed
 by the mediator will distinguish them.
\begin{eqnarray}
g_i(a)\cap g_i(b)\neq \emptyset \Longrightarrow m(a)\cap m(b) = \emptyset
\end{eqnarray}
Moreover, the recoloring $h :Q \longrightarrow R$ keeps this property. For all player $i\in \mc{K}$,
\begin{eqnarray}
g_i(a)\cap g_i(b)\neq \emptyset&& \\
q\in m(a),\; q'\in m(b) &&\Longrightarrow h(q)\neq h(q') \Longrightarrow r\neq r'
\end{eqnarray}
Condition $(1')$ implies that, for each player $i\in \mc{K}$, the pair of information $(s_i,r)$ is sufficient to reconstruct the Perfect Monitoring.\\
Condition $(2)$ states that $H\leq \C_0$ which implies that the rate of this information $r$ is lower than the capacity of the channel between the mediator and the player $i\in \mc{K}$. Thus, by the source-channel coding theorem, we have that:
\begin{eqnarray}
\forall \varepsilon>0,\;\exists (n,h,\phi,(\psi_i)_{i\in \mc{K}})\text{-process such that},\quad \PP^n_e\leq \varepsilon
\end{eqnarray}
This implies that the mediator can reconstruct the Perfect Monitoring.
Second, we show that conditions $(1')$ and $(2)$ are necessary.
Suppose that condition $(2)$ does not hold. Then, by the source-channel coding theorem of Merhav and Shamai (2003 \cite{MerhavShamai03}), it is impossible to transmit the source $r$ over the channel $f$ with low error probability. The mediator cannot reconstruct the Perfect Monitoring.\\
Suppose that condition $(1')$ does not hold. Then, there exists a player $i$ and a pair of strategic information $a,b$ that have the same color $s$ and  there exists an edge $e=(a,b)$. This implies that with positive probability player $i$ will observe a private signal $r$ and a public signal $s$ when strategic information $a$ or $b$ is drawn. Then, the mediator cannot reconstruct the Perfect Monitoring.
\end{footnotesize}
\end{proof}

\section{Proof of Theorem 4}\label{ProofTheorem4}

\begin{proof}
\begin{footnotesize}
We ever show that the condition $(1)$ is sufficient to reconstruct $x+y-xy$-Perfect Monitoring in one shot. We show that if the family of channels $(f_i)_{i\in \mc{K}}$ between the mediator and each player satisfy an $z$-perfect condition (condition (2')), then each player monitors with a precision at least of $x+y+z-xy-xz-yz+xyz$.
Let us calculate the precision of such a joint monitoring received by player $i$ is $\Lambda_i : A \longrightarrow  S_i  \times Y_i$. For all joint strategic information $a\in A$ and for each player $i\in \mc{K}$, the $(x,y)$-coloring property guarantees the existence of a partition defined by $T_a^i= S_{\alpha}^i\times Q_{\beta}\times Y^r_i$ such that $a\in \alpha$, $a\in \beta$ and $r\in \tilde{h}(Q_{\beta})$.  The precision of the joint monitoring is upper bounded by,
\begin{tiny}
\begin{eqnarray*}
&&\sum_{s,y\in T_a}\Lambda_i(s,y|a)\\
&=&\sum_{s\in S_{\alpha}^i}g_i(s|a)\sum_{q\in Q_{\beta}}\sum_{r\in \tilde{h}(Q_{\beta})}\sum_{y\in Y^r_i}f_i(y|r)\tilde{h}(r|q)m(q|a)\\
&\geq& \sum_{s\in S_{\alpha}^i}g_i(s|a)\sum_{q\in Q_{\beta}}m(q|a)\sum_{r\in \tilde{h}(Q_{\beta})}\tilde{h}(r|q)\sum_{y\in Y^r_i}f_i(y|r)\\
&=&\sum_{s\in S_{\alpha}^i}g_i(s|a)\sum_{q\in Q_{\beta}}m(q|a)\sum_{y\in Y^r_i}f_i(y|r)\\
&\geq& (1-y)(1-x)(1-z)\\
&=&1-(x+y+z-xy-xz-zy+xyz)
\end{eqnarray*}
\end{tiny}
For each player $i\in \mc{K}$, the joint monitoring $\Lambda_i=(g_i,f_i\circ \tilde{h} \circ m)$ is $x+y+z-xy-xz-yz+xyz$ perfect.
\end{footnotesize}
\end{proof}

\begin{figure}[hb]\label{fig:MatrixPowerGame}
	\centering
		\includegraphics[width=0.45\textwidth]{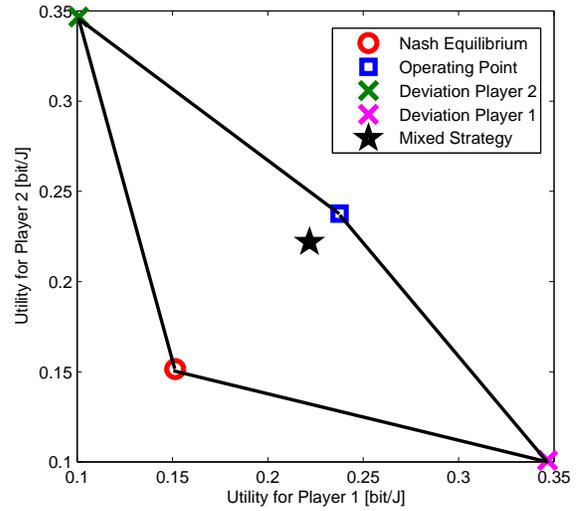}
\vspace{-0.5cm}
	\caption{Nash Equilibrium, Operating Point and the Deviation Utilities for $(K,M,N ) = (2,2,2)$}
\end{figure}

\end{document}